\documentclass[twocolumn,letterpaper,unpublished,accepted=2022-09-30]{quantumarticle}
\pdfoutput=1
\usepackage[T1]{fontenc}
\usepackage[latin9]{inputenc}
\setcounter{secnumdepth}{3}
\usepackage{amsmath,amsthm,bbm}
\usepackage{amssymb}
\usepackage{graphicx,cite,appendix}
\usepackage[colorlinks=true,citecolor=blue,linkcolor=blue,urlcolor=blue]{hyperref}
\def\oper{{\mathchoice{\rm 1\mskip-4mu l}{\rm 1\mskip-4mu l}
{\rm 1\mskip-4.5mu l}{\rm 1\mskip-5mu l}}}
\def\<{\langle}
\def\>{\rangle}

\newtheorem{Proposition}{Proposition}

\usepackage{color}
\usepackage{soul,braket}
\usepackage{cancel}
\newcommand{\BH}{{\mathcal{B}(\mathcal{H}_S)}}
\usepackage{xcolor}
\usepackage[normalem]{ulem}
\newcommand{\stkout}[1]{\ifmmode\text{\sout{\ensuremath{#1}}}\else\sout{#1}\fi}

\makeatletter

%%%%%%%%%%%%%%%%%%%%%%%%%%%%%% LyX specific LaTeX commands.

%%%%%%%%%%%%%%%%%%%%%%%%%%%%%% User specified LaTeX commands.
\usepackage{tikz}
\usepackage{pgfplots}
\usetikzlibrary{positioning}
\usetikzlibrary{decorations.markings}
\usetikzlibrary{arrows,decorations.text}
\usetikzlibrary{backgrounds,fit,decorations.pathreplacing}

\makeatother

\usepackage{babel}

\begin{document}

\title{How to design quantum-jump trajectories via distinct master equation representations}

\author{Dariusz Chru\'sci\'nski}
\email{darch@fizyka.umk.pl}
\affiliation{Institute of Physics, Faculty of Physics, Astronomy and Informatics,
Nicolaus Copernicus University, Grudziadzka 5/7, 87-100 Toru\'{n},
Poland}

\author{Kimmo Luoma}
\email{kimmo.luoma@tu-dresden.de}
\affiliation{Institut f{\"u}r Theoretische Physik, Technische Universit{\"a}t Dresden,
 D-01062, Dresden, Germany}
\affiliation{Turku Center for Quantum Physics, Department of Physics and
Astronomy, University of Turku, FI-20014, Turun Yliopisto, Finland}

\author{Jyrki Piilo}
\email{jyrki.piilo@utu.fi}
\affiliation{Turku Center for Quantum Physics, Department of Physics and
Astronomy, University of Turku, FI-20014, Turun Yliopisto, Finland}

\author{Andrea Smirne}
\email{andrea.smirne@unimi.it}
\affiliation{Dipartimento di Fisica ``Aldo Pontremoli'', Universit{\`a} degli Studi di Milano, Via Celoria 16, I-20133 Milan, Italy}
\affiliation{Istituto Nazionale di Fisica Nucleare, Sezione di Milano, Via Celoria 16, I-20133 Milan, Italy}

\begin{abstract}

Every open-system dynamics can be associated to infinitely many stochastic pictures, called
unravelings, which have proved to be extremely useful in several contexts, both from the conceptual
and the practical point of view. Here, focusing on quantum-jump unravelings, we demonstrate that there exists inherent freedom in how to assign 
the terms of 
the underlying master equation to the deterministic and jump parts of the stochastic description, which leads to a number of qualitatively different unravelings.
As relevant examples, we show that a fixed basis of post-jump states can be selected under some definite conditions, or that the deterministic evolution can be set by a chosen
time-independent non-Hermitian Hamiltonian, even in the presence of external driving. Our approach relies on the definition of rate operators, whose positivity equips each unraveling with a 
continuous-measurement scheme and is related to a 
long known but so far not widely used
property to classify quantum dynamics, known as dissipativity. 
Starting from formal mathematical concepts, our results allow us to
get fundamental insights into open quantum system dynamics and to enrich their numerical simulations.
\end{abstract}

\maketitle

\section{Introduction}
Quantum jumps provide a powerful and insightful tool to describe the dynamics
of open quantum systems \cite{Breuer,Carmichael}.
In the quantum-jump unraveling, the open-system state satisfying an assigned
master equation is expressed as the average of -- in principle, infinitely many -- trajectories of pure states, which
consist of deterministic evolutions interrupted by discontinuous, randomly distributed jumps \cite{Dalibard1992}.
Quantum jumps have been observed in several experimental platforms \cite{Basche1995,Peil1999,Jelezko2002,Gleyzes2007,Vijay2011,Minev2019}
and are computationally convenient to solve high-dimensional
master equations,
%.Hence, quantum jumps 
thus being routinely used
to describe, e.g., quantum optical \cite{Plenio1998} or open many-body systems \cite{Daley2014}.
Interestingly, if the jumps are frequent and
small enough, one can recover a different stochastic description
of the open-system evolution, characterized by
diffusive trajectories \cite{Percival2002,Barchielli2009,Wiseman1993,Strunz1999, Yu1999,Luoma2020},
which have also been investigated extensively in experiments \cite{Siddiqi2013,Huard2016,Gourgy2016,Ficheux2018}.

In the standard quantum-jump method, named Monte Carlo wave function (MCWF), the state transformations induced by the jumps and their occurrence probabilities are directly fixed
by the operators and coefficients of the master equation \cite{Dalibard1992,Plenio1998}.
This allows one to associate each trajectory
with a continuous selective measurement performed on the open system \cite{Barchielli1991},
so that the master equation
can be seen as the result of the action of a non-selective observer, replacing the environment.
However, the very definition of MCWF calls for a master equation with positive coefficients.
Under some regularity conditions, this requirement is equivalent to the completely-positive(CP)-divisibility of the dynamics \cite{Laine2010,Chruscinski2011,Rivas2012}, meaning that the dynamics can be decomposed
into intermediate completely positive maps. CP-divisibility has been introduced within the context of the definition of quantum Markovianity \cite{Rivas2010,Rivas2014} and its validity implies the absence of memory effects \cite{Breuer2009,Laine2010,Breuer2016}.

While generalizations of MCWF for master equations with negative coefficients have been introduced \cite{Piilo2008,Piilo2009}, the possibility to extend
the continuous-measurement picture beyond 
the realm of
the CP-divisible evolutions has been extensively
debated~\cite{Gambetta2003,Diosi2008,Wiseman2008}.
Only very recently, a systematic approach has been put forward \cite{Jyrki-2020} 
to read the quantum-jump unraveling
of non-CP-divisible evolutions
in terms of continuous measurements. The approach relies on the definition of a proper rate operator \cite{Diosi1985,Diosi1986,Diosi1988,Gisin1990} and it has hence been named rate operator
quantum jumps (ROQJ). It is associated with a continuous-measurement scheme
applying to any positive(P)-divisible dynamics \cite{Vacchini2011,Sabrina-PRL,Wissmann2015,Breuer2016}, i.e., dynamics that can be decomposed into intermediate
maps which are positive, but not necessarily completely positive.
Such a scheme calls for an adjustment of the measurement apparatus conditioned on the previous sequence of outcomes, along the same lines of what happens in
measurement-based feedback protocols \cite{Wiseman2010,Zhang2017,Gourgy2018,Martin2020,Magrini2021}.

In this paper, we define a class of 
 jump unravelings
interpolating between and combining the advantages of MCWF and
ROQJ. We exploit the freedom in dividing any  
master equation into a deterministic and a jump part, possibly even mixing
the contributions from the Hamiltonian and the dissipative
terms. Besides its conceptual interest, this allows us to simplify the
numerical and experimental implementation of the trajectories.
On the one hand,
we can  select a fixed set of post-jump
states for a well-defined class of master equations. On
the other, we can instead set a desired time-independent, non-Hermitian
linear operator ruling the deterministic evolution, also in the
presence of an external time-dependent driving on the open system. 
Importantly, our analysis also shows that, under a 
constraint on P-divisible dynamics known as dissipativity \cite{L},
there exist (infinitely) many unravelings that can be associated with positive rate operators,
bringing along different continuous-measurement schemes.

The rest of the paper is organized as follows.
In Sec.~\ref{sec:qju}, we present the notation and the general notions of quantum-jump
unravellings that will be used throughout the paper, along with a detailed
description of the continuous-measurement scheme associated with ROQJ unravelings. 
In Sec.~\ref{sec:aco}, we introduce a whole class of rate-operator unravelings, which interpolates between
MCWF and ROQJ and allows for a rather versatile control of the resulting trajectories.
The positivity of such unravelings is discussed in Sec.~\ref{sec:pos}, where we prove that it is guaranteed
by dissipativity of the dynamics. Sec.~\ref{sec:ex} is devoted to an extended analysis of a case study,
a two-level system dynamics where CP-divisibility is broken at any time,
for which we compare the descriptions obtained via different ROQJ unravelings. 
Finally, the conclusions of our work are discussed in Sec.~\ref{sec:con}.

\section{Quantum-jump unravelings}\label{sec:qju}

We start off by recalling briefly the formalism of quantum jumps to describe the dynamics
of open quantum systems; in particular, we focus on the standard quantum-jump unraveling, i.e.,
MCWF \cite{Dalibard1992,Plenio1998}, and the recently introduced ROQJ \cite{Jyrki-2020}.

\subsection{Open quantum system dynamics}\label{sec:oqsd}

We consider the evolution of finite-dimensional open quantum systems,
given by the time-local master equation
$d \rho(t)/(d t) = \mathcal{L}_t(\rho(t))$, where $\mathcal{L}_t$
is the generator
\begin{equation}\label{GKLS}
  \mathcal{L}_t(\rho) = -i[H(t),\rho] +   \mathcal{J}_t(\rho) - \frac{1}{2}\{ \Gamma(t),\rho\} ,
\end{equation}
with
$\mathcal{J}_t(\rho) = \sum_{\alpha=1}^{N^2-1} c_\alpha(t) L_\alpha(t) \rho L_\alpha^\dagger(t)$
and $\Gamma(t)  = \sum_{\alpha=1}^{N^2-1} c_\alpha(t) L_\alpha^\dagger(t) L_\alpha(t)$,
where $N$ is the dimension of the open-system Hilbert space $\mathcal{H}_S$,
$L_\alpha(t)$ and the Hamiltonian $H(t)=H^\dag(t)$ are linear operators on $\mathcal{H}_S$,
and $c_{\alpha}(t)$ are real functions of time.
Importantly, $\mathcal{J}_t$ is a Hermiticity-preserving map
and one has
the duality relation 
\begin{equation}\label{eq:dual}
\Gamma(t) = \mathcal{J}_t^\dagger(\mathbbm{1}), 
\end{equation}
where
$\mathbbm{1}$ is the identity operator and $\mathcal{J}_t^\dagger$ is the dual
map of $\mathcal{J}_t$.
The structure of the generator follows from the trace- and Hermiticity-preservation properties
of the dynamics \cite{GKS} $\Lambda_t = T\exp\left(\int_0^t d \tau \mathcal{L}_\tau\right)$ ($T$ is the time-ordering operator);
moreover, the functions $c_\alpha(t)$ can take
negative values, yet with the resulting evolution being positive
and even completely positive \cite{Breuer,Chruscinski2010}.
On the other hand, under some regularity conditions \cite{Laine2010,Chruscinski2011,Rivas2012}, the positivity of the
coefficients, $c_\alpha(t)\geq 0$ for every $\alpha$ and $t\geq0$, is equivalent to the CP-divisibility of
the dynamics, i.e,
for any $t\geq s \geq 0$ one has the decomposition $\Lambda_t = V_{t,s}\Lambda_s$,
where the so-called propagator $V_{t,s}$ is completely positive and trace preserving;
CP-divisibility has been identified with quantum Markovianity in \cite{Rivas2010,Rivas2014}.

A crucial remark for our following analysis is that
the representation of the generator $\mathcal{L}_t$
via Eq.(\ref{GKLS}) -- with a Hamiltonian term, a Hermiticity-preserving map and 
an operator fixed by the duality relation (\ref{eq:dual}) -- is highly non-unique. In fact,
one can define a new map
\begin{equation}\label{eq:jprime}
  \mathcal{J}'_t(\rho) =  \mathcal{J}_t(\rho) + \frac{1}{2}\left(\mathbf{C}(t) \rho + \rho \mathbf{C}^\dagger(t)\right),
\end{equation}
with an arbitrary linear operator $\mathbf{C}(t)$ on $\mathcal{H}_S$, so that decomposing 
\begin{equation}\label{bho}
\mathbf{C}(t) = A(t) + i B(t)
\end{equation} 
(with $A(t)$ and $B(t)$ Hermitian operators), together with
\begin{equation}\label{eq:hprime}
  H'(t) = H(t) +\frac{1}{2} B(t), \quad \Gamma'(t) = \Gamma(t) + A(t),
\end{equation}
the same generator $\mathcal{L}_t$ can be written as
\begin{equation}\label{eq:llp}
\mathcal{L}_t(\rho) = -i[H'(t),\rho] + \mathcal{J}'_t(\rho) - \frac 12 \{\Gamma'(t),\rho\}
\end{equation}
and one still has 
\begin{equation} \label{eq:dualprime}
\Gamma'(t) = \mathcal{J}_t'^{\dagger}(\mathbbm{1});
\end{equation}
indeed, $\Gamma'(t)$ is Hermiticity-preserving.

In the following we are going to introduce a class of jump unravelings where the map 
$\mathcal{J}_t'$
and the operator $\Gamma'(t)$ along with the Hamiltonian $H'(t)$ lead to, respectively, the jump and deterministic part of the trajectories; 
different unravellings of the same dynamics 
are thus defined for each set $\left\{H'(t), \mathcal{J}_t', \Gamma'(t)\right\}$ 
such that the non-Hamiltonian terms satisfy the duality relation in 
Eq.(\ref{eq:dualprime}) and that leads to the same generator $\mathcal{L}_t$. 
Let us stress that this is at variance with other jump unravelings, such as MCWF and ROQJ, which are instead typically defined starting from the 
Lindblad operators and coefficients ($L_\alpha(t)$ and $c_\alpha(t)$),
and are thus not affected by the rewriting of the generator via Eq.(\ref{eq:jprime})-(\ref{eq:llp}).

\subsection{Monte Carlo wave function vs rate operator quantum jumps}\label{sec:mcqj}
Both the MCWF and the ROQJ unravelings consist in
piecewise deterministic processes on the set of pure states in $\mathcal{H}_S$, 
that is, they combine a deterministic time evolution and a jump process \cite{Breuer}. 
However, the specific form of the deterministic and jump parts
are different in the two kinds of unraveling and, as a consequence, the range of applicability of the two methods is different.

In the case of CP-divisible dynamics,
the master equation fixed by Eq.(\ref{GKLS}) can be unravelled by means of the MCWF
method.
The deterministic parts of the trajectories are fixed by the
non-Hermitian linear operator 
\begin{equation}\label{eq:ex0}
K(t) = H(t) - \frac{i}{2} \Gamma(t),
\end{equation}
according to
\begin{equation}\label{I}
 |\psi(t)\> \to |\psi(t+dt)\> = \frac{ (1-i K(t)dt) |\psi(t)\>}{ \|  (1-i K(t)dt) |\psi(t)\> \|},
\end{equation}
while the discontinuous parts, the jumps, are given by
\begin{equation}\label{II}
  |\psi(t)\> \to |\psi(t+dt)\> = \frac{ L_\alpha(t) |\psi(t)\>}{ \| L_\alpha(t) |\psi(t)\> \|},
\end{equation}
and
each jump occurs between $t$ and $t+d t$ with probability
\begin{equation}
p_{\psi(t),\alpha} = c_\alpha(t) \| L_\alpha(t) |\psi(t)\> \|^2 \, dt,
\end{equation}
where $d t$ is an infinitesimal time increment.
It is clear that the previous formulation
requires all the rates $c_\alpha(t)$ to be positive.

Extending the results of \cite{Diosi1985,Diosi1986,Diosi1988}, 
recently in \cite{Jyrki-2020} it has been shown that such a requirement can be weakened considerably via the definition of a different quantum jump unraveling, named ROQJ.
The latter relies on the definition of the rate operator
\begin{equation}\label{ro}
  \mathbf{W}_{\psi} =
  \sum_{\alpha=1}^{N^2-1} c_\alpha(t) (L_\alpha(t) - \ell_{\psi,\alpha}) P_\psi (L_\alpha(t) - \ell_{\psi,\alpha}(t))^\dagger,
\end{equation}
with $\ell_{\psi,\alpha}(t) = \<\psi|L_\alpha(t)|\psi\>$
and the projector $P_\psi = |\psi\> \< \psi |$.
As observed in \cite{Diosi1986,Diosi1988},
$\mathbf{W}_{\psi}$
is directly associated with the time-local generator
$\mathcal{L}_t$ and does not depend on its specific representation via
$H(t), \mathcal{J}_t$ and $\Gamma(t)$:
in fact, it can be equivalently written as
\begin{equation}\label{rom}
     \mathbf{W}_{\psi} = (\mathbbm{1} - P_\psi) \mathcal{L}_t(P_\psi) (\mathbbm{1} - P_\psi).
\end{equation}

The building block of ROQJ is the observation that $ \mathbf{W}_{\psi} \geq 0$ for any state vector $|\psi\>$ if and only if the corresponding dynamics
is P-divisible \cite{Caiaffa2017},
i.e., for any $t\geq s\geq 0$ $\Lambda_t$ can be decomposed as $\Lambda_t = V_{t,s}\Lambda_s$,
and the propagator $V_{t,s}$ is trace preserving and positive, but not necessarily completely positive;
P-divisibility, which is a significantly weaker requirement than CP-divisibility,
has been identified with quantum Markovianity in \cite{Wissmann2015,Breuer2016}.
Focusing on P-divisible evolutions,
one has then the spectral resolution 
\begin{equation}\label{eq:spectral}
\mathbf{W}_{\psi(t)} = \sum_{k=1}^N \lambda_{\psi(t),k} |\varphi_{\psi(t),k}\>\<\varphi_{\psi(t),k}|,
\end{equation} with $\lambda_{\psi(t),k} \geq 0$ for all $k$ and $t \geq 0$.
Introducing the non-Hermitian state-dependent operator
\begin{equation}\label{eq:roqjk}
  {K}_{\psi(t)} = K(t) + \Delta_{\psi(t)},
\end{equation}
with the non-linear  correction 
$$\Delta_{\psi(t)} = \frac{i}{2} \sum_{\alpha=1}^{N^2-1} c_\alpha(t) (2L_\alpha(t) \ell^*_{\psi,\alpha}   - |\ell^*_{\psi,\alpha}|^2),$$
one realizes the jump unraveling as follows \cite{Jyrki-2020}: the deterministic evolution
\begin{equation}\label{Ia}
  |\psi(t)\> \to |\psi(t+dt)\> = \frac{ (1-i K_{\psi(t)}dt) |\psi(t)\>}{ \|  (1-i K_{\psi(t)}dt) |\psi(t)\> \|}
\end{equation}
is interrupted by the sudden jumps
\begin{equation}\label{IIa}
  |\psi(t)\> \to |\psi(t+dt)\> = \frac{ V_{\psi(t),j} |\psi(t)\> }{ \| V_{\psi(t),j}  |\psi(t)\> \|},
\end{equation}
where 
\begin{equation}\label{eq:roqjv}
V_{\psi(t),j} = \sqrt{\lambda_{\psi(t),j}} |\varphi_{\psi(t),k}\>\< \psi(t)|,
\end{equation} 
and the probability that the jump $j$ occurs between $t$ and $t+dt$ is
\begin{equation}\label{eq:roqjpr}
  {p}'_{\psi(t),j} = \| V_{\psi(t),j}  |\psi(t)\> \|^2 dt =  \lambda_{\psi(t),j}  dt ;
\end{equation}
indeed the deterministic evolution occurs with probability
\begin{equation}\label{eq:roqjpdet}
 p'^{det}_{\psi} = 1 - \sum_{j=1}^N p'_{\psi,j}.
\end{equation}

Compared to MCWF, the operators $L_\alpha(t)$ and rates $c_\alpha(t)$
are replaced in ROQJ by the eigenvectors and eigenvalues of the rate operator $\mathbf{W}_{\psi(t)}$.
This is the key that allows one to formulate a well-defined unraveling,
with positive probabilities ${p}'_{\psi(t),j}$,
for the set of P-divisible dynamics, rather than just for CP-divisible dynamics as in MCWF.
For positive semigroups (for which the generator
does not depend on time, $\mathcal{L}_t = \mathcal{L}$), a jump-unravelling fully equivalent to the ROQJ has been introduced in \cite{Diosi1986,Diosi1988}, 
and further investigated (under the name of ortho-jumps) in \cite{Braun2000,Braun2002}; see also \cite{Diosi2017}. Later \cite{Caiaffa2017,Jyrki-2020}, the positivity of the rate operator has been shown to be in one-to-one correspondence with the P-divisibility of the dynamics, thus extending the definition of the unraveling to this whole class of dynamics. In this paper, we show how it is
possible to exploit the non-uniqueness of the master equation representation to define
a whole class of positive unravelings.

Both MCWF and ROQJ can be extended to deal with, respectively, non-CP-divisible and
non-P-divisible evolutions, via the use of reversed jumps connecting different
trajectories \cite{Piilo2008,Piilo2009,Jyrki-2020}.
However, ROQJ applies to general not necessarily P-divisible evolutions \cite{Jyrki-2020}, and it can be used to treat certain non-CP-divisible dynamics \cite{ENM,Chruscinski2015,Nina} where the non-Markovian version of MCWF cannot be used.

Finally, we stress that the ROQJ involves the diagonalization of the rate operator,
which adds complexity to the actual implementation of the unraveling. 
On the other hand, such a diagonalization concerns an $N \times N$ matrix and it
is thus significantly simpler, for example, than the direct
diagonalization of the $N^2 \times N^2$ matrix associated with the generator of the master equation. In addition, crucially, the diagonalization in ROQJ
needs not to be done at each time step, contrary to what would happen for example in
a diffusive unraveling \cite{Caiaffa2017}.
Rather, one has to diagonalize the rate operator only if a jump occurs and  jumps
are indeed rare events, whose probability is proportional to the infinitesimal time step $d t$.
The key point is that one can fix
whether a jump occurs or not by looking at the deterministic part
of the evolution only, as the probability of having any jump is
\begin{align}\label{eq:prequ}
\sum_j {p}'_{\psi(t),j}& = \sum_j \lambda_{\psi(t),j}  dt = \mbox{Tr}\left\{\mathbf{W}_{\psi(t)} \right\} dt,
\end{align}
where $\mbox{Tr}$ denotes the trace,
which leads us to (neglecting as usual terms of order $dt^2$)
\begin{equation}\label{eq:qu}
\sum_j {p}'_{\psi(t),j} = 1- \|  (1-i K(t)dt) |\psi(t)\> \|^2.
\end{equation}
Therefore, since we know the operator $K(t)$ giving the deterministic evolution,
we need to construct the rate operator and diagonalize it only at those (rare) times when a jump takes place.
Note that an analogous reasoning applies
%this applies not only to $\mathbf{W}_{\psi}$, but also 
to the rate operators $\mathbf{R}_{\psi}$ described in the following.

\subsection{Continuous-measurement scheme}\label{sec:cms}

Both in MCWF \cite{Barchielli1991} and in ROQJ \cite{Jyrki-2020} unraveling the
trajectories can be seen as due to a continuous measurement on the open system.
The state transformations and corresponding probabilities can be associated
with a quantum instrument \cite{Heinosaari2012}, mapping the set of outcomes 
into a set of open-system completely positive trace non-increasing maps
that sum up to a trace preserving map, but, crucially, in ROQJ continuous-measurement schemes 
can be defined for the whole set of
P-divisible dynamics.

Let us in fact consider the rate operator $\mathbf{W}_{\psi}$ of a P-divisible dynamics,
and the corresponding jump operators $\left\{V_{\psi,j}\right\}_{j=1, \ldots N}$
and jump probabilities $\{p'_{\psi,j}\}_{j=1, \ldots N}$ defined 
as in Eqs.(\ref{eq:roqjv}) and (\ref{eq:roqjpr}). In addition, let us denote as $V_{\psi,\emptyset}$
the operator obtained from the first-order evolution associated with the effective Hamiltonian in Eq.(\ref{eq:roqjk}) according to
\begin{equation}\label{eq:mattbis}
V_{\psi,\emptyset} = \left(1-i K_{\psi}dt\right) \ket{\psi}\bra{\psi},
\end{equation}
and as $p'_{\psi,\emptyset} = p'^{det}_{\psi}$ the probability of having a deterministic
evolution according to Eq.(\ref{eq:roqjpdet}).
It is then readily seen that these operators and probabilities 
correspond to the state transformations and associated probabilities
of a well-defined quantum instrument,
for any fixed $\ket{\psi}\in\mathcal{H}_S$. 

Take in fact the set of outcomes 
$\mathcal{O}=\left\{a, \emptyset, j\right\}_{j=1, \ldots N}$
and the maps $\left\{\mathcal{I}_{\psi,j} \right\}_{j=a, \emptyset, 1, \ldots N}$
whose action on a generic state $\rho$ on $\mathcal{H}_S$ is given by
\begin{equation}\label{eq:roqjins}
\mathcal{I}_{\psi,j}(\rho) = V_{\psi,j} \rho V^{\dag}_{\psi,j} \quad j=a, \emptyset, 1, \ldots N,
\end{equation}
where we introduced also the (auxiliary) outcome $a$, along with the operator 
$V_{\psi, a} = \mathbbm{1} -  \ket{\psi}\bra{\psi}$.
%that is of course the projector into the subspace orthogonal to $\ket{\psi}\bra{\psi}$.
Now, for any fixed $\ket{\psi}\in\mathcal{H}_S$, the maps in Eq.(\ref{eq:roqjins}) are indeed completely positive
and trace non-increasing and they sum up to a trace preserving map, since
$\sum_{j=a, \emptyset, 1, \ldots N} V^\dag_{\psi,j} V_{\psi,j}=\mathbbm{1}$,
as follows from Eqs.(\ref{eq:spectral}), (\ref{eq:roqjv}), (\ref{eq:prequ}) and (\ref{eq:qu}).
Thus, the state transformations
and associated probabilities
\begin{equation}\label{eq:siono}
\rho \mapsto \frac{\mathcal{I}_{\psi,j}(\rho)}{\mbox{Tr}\left\{ \mathcal{I}_{\psi,j}(\rho)\right\}}
\qquad \mathcal{P}_{\psi,j} = \mbox{Tr}\left\{ \mathcal{I}_{\psi,j}(\rho)\right\}
\end{equation}
%and associated probabilities $\mbox{Tr}\left\{ \mathcal{I}_{\psi,j}(\rho)\right\}$
correspond to the post-measurement states and probabilities 
of the measurement with outcomes $\mathcal{O}$
and described by the quantum instrument 
%mapping the elements of $\mathcal{O}$ into 
$\left\{\mathcal{I}_{\psi,j} \right\}_{j=a, \emptyset, 1, \ldots N}$.
But if we now focus on the action of the maps on the pure state 
$\rho = \ket{\psi}\bra{\psi}$,
we see that the state transformations and probabilities
in Eq.(\ref{eq:siono}) for $j = 1, \ldots, N$
are nothing else than, respectively, the state after the jump as in Eqs.(\ref{eq:roqjv}) 
and the jump probability in (\ref{eq:roqjpr}), $\mathcal{P}_{\psi,j}=p'_{\psi,j}$, 
while for $j = \emptyset$ we get the deterministic evolution in Eq.(\ref{Ia})
and the probability in Eq.(\ref{eq:roqjpdet}), $\mathcal{P}_{\psi,\emptyset}=p'_{\psi,\emptyset}$, 
where the latter follows from the identity in Eq.(\ref{eq:qu});
on the other hand, the (auxiliary) outcome $a$ occurs with zero probability, $\mathcal{P}_{\psi,a}=0$. 
The instrument in Eq.(\ref{eq:roqjins}) can be realized, for example, by $N$ counters surrounding the system
and parametrized by the index $j$: a click of the $j$-th counter indicates that the system
jumps to the $j$-th eigenstate of $\mathbf{W}_{\psi}$.
Importantly, also the scenario where no counter clicks corresponds to a measurement
performed on the system, with null result $\emptyset$ \cite{Wiseman1996}
and, as said, resulting in the evolution fixed by Eq.(\ref{eq:mattbis}). 

If we now consider any trajectory of the jump unraveling\footnote{On more precise mathematical terms, one should consider a stochastic pure state, that is a stochastic process with values in $\mathcal{H}_S$,
defined by a counting process
whose trajectories correspond to the sequences of jumps \cite{Barchielli1991,Jyrki-2020}.} defined by 
$\mathbf{W}_{\psi}$ 
in Sec.\ref{sec:mcqj}, we can repeat the reasoning above for any infinitesimal time $dt$
starting from any pure initial state $\ket{\psi_0}$,
which means that such a trajectory is equivalently obtained as the result of a continuous measurement
performed on the quantum system associated with $\mathcal{H}_S$
and described by the quantum instrument in Eq.(\ref{eq:roqjins}).
As a consequence, the open-system dynamics fixed by the master equation (\ref{GKLS})
and resulting from the average over the trajectories of the unraveling is equivalently obtained
as the consequence of a continuous non-selective monitoring of the system at hand, meaning that
one is continuously measuring the system via the measurement apparatus described
by the instrument in Eq.(\ref{eq:roqjins}), but without selecting the system according to the 
measurement outcomes. %(i.e. ``without knowing the measurement outcomes'').

Crucially, the instrument to be applied on each trajectory at any time $t$ depends on 
the (stochastic) state $\ket{\psi(t)}$ or, equivalently, on the previous sequence of events,
which is what allows us to introduce a proper continuous-measurement scheme
in the presence of P-divisible, but not necessarily CP-divisible dynamics \cite{Jyrki-2020}.
On a practical level, this means that the actual realization of the continuous-measurement scheme
described in this section calls for a continuous adjustment of the measurement
apparatus monitoring the system, depending on the sequence of outcomes.
While the strategy to implement this procedure 
will crucially depend on the specific experimental platform at hand,
%as it will be investigated in future work, 
we can already point to a correspondence with 
the adaptive
approaches that characterize, for example, 
the measurement-based feedback
strategies described in~\cite{Gourgy2018,Martin2020,Magrini2021}, where the measurement basis is changed dynamically
and according to the previous sequence of measurement outcomes.

\section{A new class of rate operator quantum jumps}\label{sec:aco}
We are now ready to present a novel class of quantum-jump unravelings, which 
is based on a family of rate operators defined starting from Eqs.(\ref{eq:jprime})-(\ref{eq:hprime}). This class combines features of both MCWF,
allowing for a linear effective Hamiltonian, and the ROQJ discussed above, being positive for dynamics
with possibly negative rates. In addition, we show how the freedom in choosing the rate operator
allows us to control and manipulate some basic properties of the trajectories of the unraveling,
which can be useful for the numerical simulation of the dynamics and the actual
experimental implementation of the jumps.

\subsection{Definition of the unravelings}\label{sec:dotu}
The basic observation to define the class of ROQJs is that
$\mathbf{W}_\psi$ in Eq.(\ref{rom})
can also be written as
\begin{equation}\label{WR}
  \mathbf{W}_\psi = (\mathbbm{1} - P_\psi) \mathbf{R}_\psi (\mathbbm{1} - P_\psi),
\end{equation}
where
\begin{equation}\label{R}
  \mathbf{R}_{\psi(t)} = \mathcal{J}_t(|\psi(t)\>\<\psi(t)|).
\end{equation}
%in other terms, decomposing the action of the generator $\mathcal{L}_t$
%into a block structure fixed by the projector $P_\psi$ into the current state $\ket{\psi}$ and the orthogonal projector, $(\mathbbm{1} - P_\psi)$, the rate operator $\mathbf{W}_\psi$ equals the off-diagonal block, further restricted to the sole contribution of the first term in the dissipative part of $\mathcal{L}_t$.
Now, if $ \mathbf{R}_{\psi(t)} \geq 0$
for any $|\psi(t)\>$ one can define a jump unraveling that merges (\ref{eq:ex0})-(\ref{I}) 
with (\ref{IIa})-(\ref{eq:roqjv});
that is, the deterministic evolution is governed by (\ref{I}), but the jumps are realized via
\begin{equation}\label{IIb}
  |\psi(t)\> \to |\psi(t+dt)\> = \frac{ R_{\psi(t),k}  |\psi(t)\>}{ \| R_{\psi(t),k} |\psi(t)\> \|} ,
\end{equation}
with 
\begin{equation}\label{eq:pre1}
R_{\psi(t),k} = \sqrt{ r_{\psi(t),k}}  |\phi_{\psi(t),k}\>\<\psi(t)|,
\end{equation}
and the probability that the jump $k$ occurs between $t$ and $t+dt$ reads now
\begin{equation}\label{IIc}
  {p}''_{\psi(t),k} = \| R_{\psi(t),k}  |\psi(t)\> \|^2 dt =  r_{\psi(t),k} \, dt;
\end{equation}
indeed, $r_{\psi(t),k}$ and $|\phi_{\psi(t),k}\>$ are the eigenvalues and eigenvectors
of $\mathbf{R}_{\psi(t)}$, i.e.,
\begin{equation}\label{eq:ex1}
\mathbf{R}_{\psi(t)} = \sum_{k=1}^N r_{\psi(t),k} |\phi_{\psi(t),k}\>\<\phi_{\psi(t),k}|
\end{equation}
and the deterministic evolution will occur with probability
$p''^{det}_{\psi(t)} = 1 - \sum_{k=1}^N  {p}''_{\psi(t),k}$.
General conditions guaranteeing the positivity of $\mathbf{R}_{\psi(t)}$ will be presented
in Sec.~\ref{sec:pos}.
We stress that whenever $\mathbf{R}_{\psi(t)}$ is positive 
the analysis performed in Sec.\ref{sec:cms}
can be readily adapted to the unraveling fixed by this rate operator, so that a continuous-measurement
scheme can be defined via a quantum instrument
with maps $\mathcal{I}_{\psi,j}$ as in Eq.(\ref{eq:roqjins}),
where $\{V_{\psi,j}\}_{j=1,\ldots N}$ are indeed replaced by $\{R_{\psi,k}\}_{k=1,\ldots N}$,
while the operator $V_{\emptyset,j}$ corresponding to the outcome $\emptyset$ and associated 
with the deterministic evolution is now fixed by the effective Hamiltonian in Eq.(\ref{eq:ex0}),
compare with Eq.(\ref{eq:mattbis}).

Before proving that the construction fixed by Eqs.(\ref{R})-(\ref{eq:ex1}) 
provides us with a well-defined unraveling
of the master equation~(\ref{GKLS}) whenever the rate operator is positive, 
let us stress that it avoids the non-linear correction in the non-Hermitian
operator fixing the deterministic evolution, still potentially allowing
for positive probabilities in the case of non-CP-divisible evolutions;
${p}''_{\psi(t),k}\geq 0$ is in fact equivalent to the requirement
that the map $\mathcal{J}_t$ is positive for all $t \geq 0$,
which is significantly weaker than CP-divisibility as we will see in Sec.~\ref{sec:pos}.
Moreover, the definition of the rate operator
$\mathbf{R}_{\psi(t)}$ in Eq.(\ref{R}) does depend on the specific choice
of the map $\mathcal{J}_t$ in the representation of the generator $\mathcal{L}_t$
as in Eq.(\ref{GKLS}); as a consequence, Eqs.(\ref{I}), (\ref{IIb}) and (\ref{IIc})
define a whole family of unravelings,
corresponding to different choices of the operator $\mathbf{C}(t)$ in 
Eqs.(\ref{eq:jprime})-(\ref{eq:hprime}). 
We denote these unravelings with
$\mathbf{R}$-ROQJ, specifying the explicit form of the rate operator $\mathbf{R}$
when needed; furthermore, we will refer to the ROQJ unraveling discussed 
in Sec.~\ref{sec:mcqj}
as $\mathbf{W}$-ROQJ.

\begin{proof}
%{Proof that $\mathbf{R}$-ROQJ is a proper unraveling of the master equation (\ref{GKLS})}
%\label{}
Given the pure state $|\psi(t)\>\<\psi(t)|$ at time $t$, the deterministic
evolution will occur with probability 
\begin{align}
p''^{det}_{\psi(t)} &= 1 - \sum_{k=1}^N  {p}''_{\psi(t),k}
%=  1 - \sum_{k=1}^N r_{\psi(t),k} \, dt \notag\\&
= 1 - \mbox{Tr}\left\{\mathbf{R}_{\psi(t)} \right\} dt,
\end{align}
where in the last equality we used Eq.(\ref{eq:ex1}).
But using Eq.(\ref{R}) along with 
\begin{equation}\label{eq:exmu}
\Gamma(t) = \mathcal{J}_t^\dagger(\mathbbm{1}),
\end{equation}
we get
\begin{align}
p''^{det}_{\psi(t)} &= 1 - \mbox{Tr}\left\{ \mathcal{J}_t(|\psi(t)\>\<\psi(t)|) \right\} dt \notag\\
%&= 1 -  \mbox{Tr}\left\{|\psi(t)\>\<\psi(t)|  \mathcal{J}^{\dagger}_t(\mathbbm{1}) \right\} dt \notag\\
&= 1- \<\psi(t)|\Gamma(t)|\psi(t)\> dt,\label{eq:ex2}
\end{align}
which highlights the role of the duality relation in Eq.(\ref{eq:exmu})
to express the probability of the deterministic evolution in terms of both the 
jump part of the master equation $\mathcal{J}_t$ and the term $\Gamma(t)$
entering into the non-Hermitian operator in Eq.(\ref{eq:ex0}). 
The deterministic evolution will map the pure state $|\psi(t)\>\< \psi(t)|$
into the pure state (see Eq.(\ref{I}))
\begin{align}
 &\frac{ (1-i K(t)dt) |\psi(t)\>\< \psi(t)| (1+i K^\dagger(t)dt)}{ \|  (1-i K(t)dt) |\psi(t)\> \|^2} \notag\\
 %&= \frac{ (1-i K(t)dt) |\psi(t)\>\< \psi(t)| (1+i K^\dagger(t)dt)}{1- \<\psi(t)|\Gamma(t)|\psi(t)\>  dt} \notag\\
 &= \frac{ (1-i K(t)dt) |\psi(t)\>\< \psi(t)| (1+i K^\dagger(t)dt)}{p''^{det}_{\psi(t)}},\label{eq:ex3}
\end{align}
where we used Eqs.(\ref{eq:ex0}) and (\ref{eq:ex2}) in the denominator (neglecting the terms of order $d t^2$).

On the other hand, as said, given the state $|\psi(t)\>\<\psi(t)|$ at time $t$, 
we will have the jump described by $R_{\psi(t),k}$ in Eq.(\ref{eq:pre1}), i.e. (compare
with Eq.(\ref{IIb}))
\begin{equation}\label{eq:ex4}
|\psi(t)\>\<\psi(t)| \to  \frac{ R_{\psi,k}  |\psi(t)\>\<\psi(t)| R^\dagger_{\psi,k}}{ \| R_{\psi,k} |\psi(t)\> \|^2}
= |\phi_{\psi(t),k}\>\<\phi_{\psi(t),k}|,
\end{equation}
with probability as in Eq.(\ref{IIc}).

All in all, if we average the state at time $t+dt$ over the trajectories where the state
at time $t$ is $|\psi(t)\>\<\psi(t)|$, we get the mixture of the states obtained via the deterministic evolution or one of the jumps,
each weighted with the corresponding probability: 
using Eqs.(\ref{eq:ex3}), (\ref{eq:ex4}) and (\ref{IIc}), such a (conditioned) average corresponds to
\begin{eqnarray}
&&p''^{det}_{\psi(t)}  \frac{ (1-i K(t)dt) |\psi(t)\>\< \psi(t)| (1+i K^\dagger(t)dt)}{p''^{det}_{\psi(t)}}
\nonumber\\
&&+ \sum_{k=1}^N  r_{\psi(t),k} |\phi_{\psi(t),k}\>\<\phi_{\psi(t),k}|\, dt \nonumber\\
&&= |\psi(t)\>\< \psi(t)| - i \left[ H,|\psi(t)\>\< \psi(t)| \right] dt \nonumber\\
&&- \frac 12 \{\Gamma(t), |\psi(t)\>\< \psi(t)| \} dt +
 \mathcal{J}_t(|\psi(t)\>\<\psi(t)|) dt,
\end{eqnarray}
where the equality is due to Eqs.(\ref{eq:ex0}), (\ref{R}) and (\ref{eq:ex1}), and we neglected the terms of order $dt^2$.
Finally, we perform a second average, this time with respect to the pure states
$|\psi(t)\>\<\psi(t)|$ we fixed at time $t$, so that we get the average over all the trajectories; the previous expression then yields the first order expansion of the time-local master equation 
$d \rho(t)/(d t) = \mathcal{L}_t(\rho(t))$ with $\mathcal{L}_t$ as in Eq.(\ref{GKLS}),
which concludes the proof.
\end{proof}

Indeed, the proof above does not depend on the specific representation of the generator
$\mathcal{L}_t$  and then it holds for any 
choice of the operator $\mathbf{C}$, 
%fixing the representation of the master equation~(\ref{GKLS}), see Eqs.(\ref{eq:jprime})-(\ref{eq:hprime}),
as long as the corresponding rate operator $\mathbf{R}_{\psi(t)}$ is positive. 
Going beyond the results in \cite{Jyrki-2020}, we have thus defined
a new class of unravelings of the master equation (\ref{GKLS}) for P-divisible dynamics, which 
are equipped with an associated continuous-measurement
scheme, see the discussion after Eq.(\ref{eq:ex1}). Besides including the simpler linear effective Hamiltonian that characterizes the 
MCWF method and connecting the positivity of the rate operator with a significant property of the dynamics, as proved in Sec.\ref{sec:pos}, the use of distinct master-equation representations
allows us to control the deterministic and jump parts
of the unraveling in a versatile way, as we are now going to show.
%Quite remarkably, we can exploit the different legitimate rate operators to control the deterministic and jump parts
%of the unraveling.

\subsection{$\mathbf{R}$-ROQJ with fixed post-jump states}
\label{sec:fpj}

First of all,  a proper choice of $\mathbf{C}$ can be used to simplify to a significant extent the jumps in the unraveling. 
An explicit condition for this can be derived for two-dimensional open quantum systems,
i.e., $\mathcal{H}_S = \mathbbm{C}^2$.

Given any time-local generator $\mathcal{L}_t$ on the set of linear
operators $\mathcal{B}(\mathbbm{C}^2)$ in the GKLS form
as in Eq.(\ref{GKLS}),
$\mathcal{J}_t$ can be represented via the 16 parameters $J^{k l}_{i j}(t)$ defined by
\begin{equation}\label{eq:166}
J^{k l}_{i j}(t) =\bra{i}\mathcal{J}_t[\ket{k}\bra{l}]\ket{j}  \quad i,j=1,2,
\end{equation}
with respect to any orthonormal basis  $\left\{\ket{1},\ket{2}\right\}$;
note that the Hermiticity-preservation condition implies
$\left(J^{k l}_{i j}(t)\right)^* = J^{l k}_{j i}(t)$.
Now, if there is a basis such that
\begin{equation}
J^{k l}_{i j}(t)=(J^{k l}_{i j}(t))^*
\quad \forall i,j,k,l=1,2,\label{eq:cond2}
\end{equation}
i.e., the Choi matrix associated with $\mathcal{J}_t$ is real,
and the deterministic evolution in-between the jumps does not introduce a relative phase
when acting on the basis elements, it is possible to define a rate operator $\mathbf{R}_{\psi(t)}$ via Eqs.(\ref{eq:jprime}) and (\ref{R}) 
%(with $\mathcal{J}'_t$ instead of $\mathcal{J}_t$)
whose eigenvectors are independent from the pre-jump state $\ket{\psi(t)}$.
This implies that the post-jump states
are always the same, so that the trajectories are fixed by at most 3 deterministically-evolving
states, which results in a strong simplification of both the numerical simulation
and the experimental implementation of the corresponding unraveling,
as will also be shown by means of an example in Sec.\ref{sec:ex}.
More explicitely, we have in fact the following.

\begin{Proposition}\label{prop:jj}
Given the master equation (\ref{GKLS}) for $N=2$, 
if there is an orthonormal basis of $\mathbbm{C}^2$, denoted as 
$\left\{\ket{\varphi_1},\ket{\varphi_2}\right\}$, and a time-dependent linear operator $\mathbf{C}(t) \in \mathcal{B}(\mathbbm{C}^2)$
with matrix representation in this basis 
\begin{align}
&\mathbf{C}(t) = \label{eq:a}\\
&\left( \begin{array}{cc} 
J^{1 1}_{2 2}(t)-J^{1 1}_{1 1}(t) + i x(t) & y(t) \\
y(t)+2(J^{1 2}_{1 1}(t)-J^{1 2}_{2 2}(t)) & J^{2 2}_{1 1}(t)-J^{2 2}_{2 2}(t) + i x(t) 
\end{array} \right) \notag
\end{align}
with $x(t), y(t)$ two real functions,
such that
\begin{itemize}
\item the jump operator $\mathcal{J}'_t$ fixed by Eq.(\ref{eq:jprime}) satisfies Eq.(\ref{eq:cond2});
\item if the state $\ket{\psi(t)}$ at time $t$ is in the form
\begin{equation}
\ket{\psi(t)} = c(t)\ket{\varphi_1} \pm \sqrt{1-c(t)^2}\ket{\varphi_2} \label{eq:cond3}
\end{equation}
for some $-1\leq c(t) \leq 1$,
the deterministic evolution fixed by Eqs.(\ref{eq:hprime}), (\ref{eq:ex0}) (with $H'$
instead of $H$ and $\Gamma'$ instead of $\Gamma$) and (\ref{I})
implies that the state at time $t+dt$ is in the form 
\begin{eqnarray}
\ket{\psi(t+dt)}= a(t, dt) \ket{\varphi_1} \pm \sqrt{1-a(t, dt)^2}\ket{\varphi_2} \nonumber\\ \label{eq:cond1}
\end{eqnarray}
for some $-1\leq a(t, dt)\leq 1$;
\item the rate operator $\mathbf{R}_{\psi(t)} = \mathcal{J}'_t(|\psi(t)\>\<\psi(t)|)$
is positive for any state $\ket{\psi(t)}$ as in Eq.(\ref{eq:cond3});
\end{itemize}
and we restrict to
\begin{itemize}
\item initial states $\ket{\psi_0}=\ket{\psi(0)}$ in the form as in Eq.(\ref{eq:cond3}),
\end{itemize}
then there exists a jump unraveling
%has 2 fixed eigenvectors 
%$\ket{\varphi_{\pm}} = \frac{1}{\sqrt{2}}(\bra{\varphi_1}\pm\ket{\varphi_2}),\ket{\varphi_2}$ for any 
%$\rho \in \mathcal{B}(\mathbbm{C}^2)$, so that the corresponding jump-unraveling
that involves only the 3 families of states 
$\left\{\ket{\psi_0(t)}, \ket{\varphi_+(t,s)},\ket{\varphi_-(t,s)}\right\}$, which are deterministically evolved
from $\left\{\ket{\psi_0}, \ket{\varphi_\pm}=\frac{1}{\sqrt{2}}(\ket{\varphi_1}\pm\ket{\varphi_2})\right\}$ 
via
\begin{align}
D(t,s) = & T \exp\left(-i\int_s^t d \tau \left(H'(\tau) -\frac{i}{2}\Gamma'(\tau)\right)\right). \label{eq:det}
\end{align}
\end{Proposition}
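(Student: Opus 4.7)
The plan is to reduce the statement to the fact that, under the hypotheses, the rate operator $\mathbf{R}_{\psi(t)}=\mathcal{J}'_t(\ket{\psi(t)}\bra{\psi(t)})$ admits $\ket{\varphi_\pm}=\frac{1}{\sqrt{2}}(\ket{\varphi_1}\pm\ket{\varphi_2})$ as eigenvectors for \emph{every} $\ket{\psi(t)}$ of the form (\ref{eq:cond3}). Combined with the general $\mathbf{R}$-ROQJ construction established in the previous subsection and the preservation property (\ref{eq:cond1}) of the deterministic evolution, this immediately yields the three families claimed in the statement.

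To show the eigenvector property, I would first expand $\ket{\psi(t)}\bra{\psi(t)}$ for $\ket{\psi(t)}$ as in (\ref{eq:cond3}) as a real linear combination of the outer products $\ket{\varphi_k}\bra{\varphi_l}$, apply $\mathcal{J}'_t$, and read off the matrix elements of $\mathbf{R}_{\psi(t)}$ in the basis $\{\ket{\varphi_1},\ket{\varphi_2}\}$. The realness assumption (\ref{eq:cond2}), together with the Hermiticity-preservation identity ${J'}^{kl}_{ij}={J'}^{lk}_{ji}$ (inherited by $\mathcal{J}'_t$ from $\mathcal{J}_t$ and $\mathbf{C}$), then shows that $\mathbf{R}_{\psi(t)}$ is a real symmetric $2\times 2$ matrix. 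What remains is the non-trivial point that its two diagonal entries coincide, for only in that case does a real symmetric $2\times 2$ matrix have $\{\ket{\varphi_+},\ket{\varphi_-}\}$ as an eigenbasis.

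The main step is therefore to verify that the tailored choice of $\mathbf{C}(t)$ in Eq.(\ref{eq:a}) enforces the three identities ${J'}^{11}_{11}={J'}^{11}_{22}$, ${J'}^{22}_{11}={J'}^{22}_{22}$, and ${J'}^{12}_{11}+{J'}^{21}_{11}={J'}^{12}_{22}+{J'}^{21}_{22}$, which are exactly what is needed to match the $c^2$, $(1-c^2)$, and $\pm c\sqrt{1-c^2}$ contributions in the expansion of the two diagonal entries. Using Eq.(\ref{eq:jprime}), each ${J'}^{kl}_{ij}$ equals $J^{kl}_{ij}$ plus an explicit linear combination of the matrix entries $C_{mn}(t)$; substituting the entries read off from (\ref{eq:a}) then confirms the identities, with the free real parameters $x(t)$ and $y(t)$ reflecting the residual gauge freedom in the splitting (\ref{bho}).

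Finally, the positivity hypothesis on $\mathbf{R}_{\psi(t)}$ yields non-negative eigenvalues $r_{\psi(t),\pm}\geq 0$, so the jumps (\ref{IIb})--(\ref{IIc}) are well defined and always project the state onto $\ket{\varphi_+}$ or $\ket{\varphi_-}$. Since these post-jump states are themselves of the form (\ref{eq:cond3}) (with $c=1/\sqrt{2}$), assumption (\ref{eq:cond1}) ensures that the deterministic propagator $D(t,s)$ of (\ref{eq:det}) preserves the class (\ref{eq:cond3}) after each jump, and a straightforward induction on the number of jumps shows that every trajectory visits only the three families $\ket{\psi_0(t)}$, $\ket{\varphi_\pm(t,s)}$. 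The main obstacle I anticipate is the careful bookkeeping of the shifts from $J^{kl}_{ij}$ to ${J'}^{kl}_{ij}$ in the third step; everything else is essentially a structural consequence.
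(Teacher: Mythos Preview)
Your proposal is correct and follows essentially the same route as the paper's proof. The paper computes $\mathcal{J}'_t(\overline{\rho})$ directly for any $\overline{\rho}$ with real off-diagonal entry and displays it as a matrix $\begin{pmatrix}\alpha & \beta\\ \beta & \alpha\end{pmatrix}$ with explicit $\alpha,\beta$, then observes that such a matrix has $\ket{\varphi_\pm}$ as eigenvectors and runs the same induction on jumps; your three identities ${J'}^{11}_{11}={J'}^{11}_{22}$, ${J'}^{22}_{11}={J'}^{22}_{22}$, ${J'}^{12}_{11}+{J'}^{21}_{11}={J'}^{12}_{22}+{J'}^{21}_{22}$ are precisely the content of the equality of the two diagonal entries in that computation.
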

The conditions in Eqs.(\ref{eq:cond3}) and (\ref{eq:cond1}) mean that the unraveling
will involve exclusively pure states without a relative phase in the basis 
$\left\{\ket{\varphi_1},\ket{\varphi_2}\right\}$, if this is the case at the initial time; this is also why the positivity requirement on the rate operator $\mathbf{R}_{\psi(t)}$ can be now restricted only to states of this form. In practice, the deterministic evolution between two times $s$ and $t$
will be given by the operator in Eq.(\ref{eq:det}), which indeed corresponds
to the effective non-Hermitian Hamiltonian fixed by Eq.(\ref{eq:hprime})
for the operator $\mathbf{C}(t)$ as in Eq.(\ref{eq:a}). The jumps at time $t$, given by the eigenvectors
of $\mathbf{R}_{\psi(t)}$, will always end up 
in one of the eigenvectors $\left\{\ket{\varphi_+},\ket{\varphi_-}\right\}$,
and then all in all the trajectories will consist of piecewise deterministic evolutions
among $\left\{\ket{\psi_0(t)}, \ket{\varphi_{+}(t,s)},\ket{\varphi_-(t,s)}\right\}$.
Note that the probabilities of having the jumps between time $t$ and $t+dt$
are fixed by the eigenvalues of $\mathbf{R}_{\psi(t)}$, so that
%in Eq.(\ref{eq:conaa}).
%which do depend on $\alpha$ and $\beta$, being
%\begin{equation}\label{eq:eigenvalues}
%\lambda_{\pm}= \alpha \pm \beta.
%\end{equation}
different matrices $\mathbf{C}(t)$ in Eq.(\ref{eq:a}) will generally define
distinct unravelings,
with the same trajectories but different associated probabilities.

In addition, the existence of a class of ROQJ can be exploited to 
control to a large extent the deterministic evolution in-between the jumps. 
From Eqs.(\ref{bho}), (\ref{eq:hprime}) and (\ref{eq:ex0}) we see how
any effective non-Hermitian Hamiltonian can be enforced by means of a proper choice
of $\mathbf{C}$, which will also
result in a modified set of jump states and probabilities, according to Eq.(\ref{eq:jprime}), (\ref{R})
and (\ref{IIc}).
As will be shown by means of a significant example in Sec.\ref{sec:ex}, this choice can be made
while keeping the positivity of the rate operator and it is particularly convenient for driven systems.
We stress that the definition of different partitions between
the deterministic and the jump parts of the unraveling, starting from distinct representations
of the same master equation (\ref{GKLS}) by virtue of Eqs.(\ref{eq:jprime})-(\ref{eq:llp}), 
is not encompassed by the possibility to define different unravelings that is routinely used 
in MCWF, which
relies on the invariance of the generator under unitary
transformations of the set of Lindblad operators \cite{Breuer}
and would not affect the partition of the generator as dictated
by Eq.(\ref{eq:llp}) and (\ref{eq:dualprime}); compare with the remark at the end of Sec.\ref{sec:oqsd}.

\section{Positivity of the rate operators}\label{sec:pos}
We can now complete the definition of the $\mathbf{R}$-ROQJ by deriving
general conditions ensuring that it is
associated with a positive rate operator. Besides justifying the construction
in Eqs.(\ref{IIb}) and (\ref{IIc}), this also guarantees the existence of a fully consistent
continuous-measurement picture as discussed in Sec.\ref{sec:cms}.

Clearly, if the evolution is CP-divisible, not only both MCWF and $\mathbf{W}$-ROQJ
are well defined, but one can always find a completely positive map $\mathcal{J}_t$ giving rise to a positive
$\mathbf{R}$-ROQJ.
Instead, any P-divisible evolution does guarantee positivity of $ \mathbf{W}_{\psi(t)}$, but not necessarily the positivity of $ \mathbf{R}_{\psi(t)}$. Still, we 
note that P-divisibility constrains the number of possible negative eigenvalues
of the rate operator $\mathbf{R}_{\psi(t)}$.
We have in fact the following:
\begin{Proposition} For any P-divisible evolution, $\mathbf{R}_{\psi(t)}$ can have at most one negative eigenvalue.
\end{Proposition}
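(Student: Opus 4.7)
The plan is to read off the conclusion from the block structure linking the two rate operators and invoke Cauchy interlacing. Recall from Eq.(\ref{WR}) that
\begin{equation}
\mathbf{W}_\psi \,=\, (\mathbbm{1}-P_\psi)\,\mathbf{R}_\psi\,(\mathbbm{1}-P_\psi),
\end{equation}
so if we choose an orthonormal basis $\{|\psi\rangle,|e_2\rangle,\dots,|e_N\rangle\}$ of $\mathcal{H}_S$, the matrix of $\mathbf{R}_\psi$ takes the block form
\begin{equation}
\mathbf{R}_\psi \;=\; \begin{pmatrix} a & \mathbf{v}^\dagger \\ \mathbf{v} & \tilde{\mathbf{W}}_\psi \end{pmatrix},
\end{equation}
where $a=\langle\psi|\mathbf{R}_\psi|\psi\rangle\in\mathbb{R}$, $\mathbf{v}\in\mathbb{C}^{N-1}$, and $\tilde{\mathbf{W}}_\psi$ is precisely the $(N-1)\times(N-1)$ matrix that represents $\mathbf{W}_\psi$ when restricted to the subspace orthogonal to $|\psi\rangle$ (outside that subspace $\mathbf{W}_\psi$ acts as zero by construction).

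The first step is to use the assumption of P-divisibility, which by \cite{Caiaffa2017} is equivalent to $\mathbf{W}_\psi\geq 0$ for every $|\psi\rangle$; consequently $\tilde{\mathbf{W}}_\psi\geq 0$, so all its $N-1$ eigenvalues are non-negative. The second step is to apply Cauchy's interlacing theorem to the Hermitian matrix $\mathbf{R}_\psi$ and its principal $(N-1)\times(N-1)$ submatrix $\tilde{\mathbf{W}}_\psi$: if $r_1\leq r_2\leq\cdots\leq r_N$ are the ordered eigenvalues of $\mathbf{R}_\psi$ and $w_1\leq w_2\leq\cdots\leq w_{N-1}$ those of $\tilde{\mathbf{W}}_\psi$, then
\begin{equation}
r_i \;\leq\; w_i \;\leq\; r_{i+1}, \qquad i=1,\dots,N-1.
\end{equation}
Since each $w_i\geq 0$, the upper inequality gives $r_{i+1}\geq 0$ for every $i=1,\dots,N-1$, so $r_2,\dots,r_N\geq 0$. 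Only the smallest eigenvalue $r_1$ is left unconstrained, proving the claim.

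No step looks like a real obstacle here: the block-matrix identification is immediate from Eq.(\ref{WR}), and interlacing is a standard tool. The mild subtlety to flag in writing up is that $\mathbf{W}_\psi$, as originally defined on the full $N$-dimensional space, has a forced zero eigenvalue along $|\psi\rangle$, and one must be careful to compare $\mathbf{R}_\psi$ with its genuine $(N-1)$-dimensional compression $\tilde{\mathbf{W}}_\psi$ rather than with $\mathbf{W}_\psi$ itself, since otherwise the interlacing would erroneously allow $r_1\leq 0$ only via the already-present zero. Once this identification is made explicit, the bound follows in one line.
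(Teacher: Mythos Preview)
Your proof is correct and essentially the same as the paper's: both exploit that the compression of $\mathbf{R}_\psi$ to $(\mathbbm{1}-P_\psi)\mathcal{H}_S$ equals $\mathbf{W}_\psi\geq 0$ and then use a variational eigenvalue inequality. The only cosmetic difference is that the paper invokes the Courant--Fischer min--max principle directly (showing $\lambda_2\geq\min_{x\perp\psi}\langle x|\mathbf{R}_\psi|x\rangle\geq 0$), while you cite the packaged Cauchy interlacing theorem, which is of course derived from min--max; the underlying argument is identical.
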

\begin{proof}
The proof easily  follows from the 
min-max principle for Hermitian matrices: let $A$ be a Hermitian $n \times n$ matrix, and let
$$   \lambda_n \geq \lambda_{n-1} \geq \ldots \geq \lambda_1 $$
be the real eigenvalues of $A$. Then
\begin{equation}\label{mM}
  \lambda_k = \max_{\Sigma}\, \min_{x \in \Sigma} \langle x |A| x \rangle
\end{equation}
where $x$ is normalized, and $\Sigma$ is a $(n-k+1)$-dimensional subspace of $\mathbb{C}^n$.

Now, since $ \mathbf{W}_\psi = (\mathbbm{1} - P)  \mathbf{R}_\psi  (\mathbbm{1}- P)  \geq 0$, one has for any $x \in \Sigma = (\mathbbm{1} - P)\mathbb{C}^n$
$$   \langle x |  \mathbf{R}_\psi |x \rangle =
  \langle x |  \mathbf{W}_\psi |x \rangle \geq 0 .$$
Hence,  from (\ref{mM}) one finds $\lambda_2 \geq 0$ (since in (\ref{mM}) one maximizes over all $\Sigma$s), and hence only $\lambda_1$ may be negative.
\end{proof}

\noindent We stress that the proof does not depend
on the specific $\mathcal{J}_t$ used to represent the generator $\mathcal{L}_t$
and hence applies to any $\mathbf{R}$-ROQJ.

Now, since $\mathbf{R}_{\psi(t)}$ does depend upon the specific 
$\mathcal{J}_t$, see Eq.(\ref{R}), a natural strategy to ensure the positivity of $\mathbf{R}_{\psi(t)}$ arises: is it possible to use the freedom introduced by (\ref{eq:jprime}) so that the rate operator defined in terms of $\mathcal{J}'_t$ is positive?
Interestingly, in all the examples we analyzed this is the case; 
however, we can give a positive answer in generality only for a class of P-divisible evolutions that enjoys an additional property. 
In the Heisenberg picture P-divisibility means that the propagators $V_{t,s}^\dagger$ are positive and unital. It is well known \cite{Paulsen,Stormer} that 
this implies the Kadison-Schwarz inequality
\begin{equation}\label{KS}
  V_{t,s}^\dagger(X^\dagger X) \geq V_{t,s}^\dagger(X^\dagger) V_{t,s}^\dagger(X) ,
\end{equation}
for all Hermitian operators $X=X^\dagger \in \BH$. A more restricted class of evolutions is thus defined by propagators that satisfy (\ref{KS}) for all $X$, not necessarily Hermitian. 
This leads us to the identification of the condition ensuring the positivity of 
the $\mathbf{R}$-ROQJ:
\begin{Proposition}\label{prop:ks} 
If the propagators of the dynamics satisfy (\ref{KS}) for all $X \in \BH$, there exists a representation 
of the master equation~(\ref{GKLS}) with a positive map $\mathcal{J}_t$, which we denote as $\mathbf{J}_t$.
\end{Proposition}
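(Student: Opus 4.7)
The plan is to go through a Heisenberg-picture detour. I first translate the propagator-level Kadison--Schwarz hypothesis (\ref{KS}) into an infinitesimal condition on the generator, and then invoke the Lindblad structure theorem on dissipative generators~\cite{L} to read off a positive jump map.

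For the first step, I expand the one-step propagator as $V_{t+dt,t}^\dagger = \mathbbm{1} + dt\,\mathcal{L}_t^\dagger + O(dt^2)$, substitute into (\ref{KS}), and collect the coefficient of $dt$: the zeroth-order terms cancel and one obtains the algebraic dissipativity inequality
\begin{equation}
\mathcal{L}_t^\dagger(X^\dagger X) \geq \mathcal{L}_t^\dagger(X^\dagger) X + X^\dagger \mathcal{L}_t^\dagger(X) \qquad \forall X \in \BH,
\end{equation}
holding for every $t \geq 0$. The commutator part $i[H(t),\cdot]$ of $\mathcal{L}_t^\dagger$ is a derivation and therefore drops out of this combination, so the surviving constraint really is on the dissipative part of the generator alone.

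For the second step I invoke the Lindblad characterisation~\cite{L}: any Hermiticity-preserving, unital, dissipative map $\mathcal{L}_t^\dagger$ on $\BH$ admits a decomposition
\begin{equation}
\mathcal{L}_t^\dagger(X) = i[H'(t),X] + \Phi_t(X) - \frac{1}{2}\{\Phi_t(\mathbbm{1}),X\},
\end{equation}
with Hermitian $H'(t)$ and $\Phi_t$ a \emph{positive} (not necessarily completely positive) map on $\BH$; the stronger complete-dissipativity hypothesis would instead force $\Phi_t$ to be CP, which would correspond to CP-divisibility rather than KS-divisibility. Dualising with respect to the Hilbert--Schmidt inner product and setting $\mathbf{J}_t := \Phi_t^{*}$ turns the above into
\begin{equation}
\mathcal{L}_t(\rho) = -i[H'(t),\rho] + \mathbf{J}_t(\rho) - \frac{1}{2}\{\Gamma'(t),\rho\},
\end{equation}
with $\Gamma'(t) = \Phi_t(\mathbbm{1}) = \mathbf{J}_t^\dagger(\mathbbm{1})$, which is exactly the (\ref{GKLS})-type representation with $\mathbf{J}_t$ in place of $\mathcal{J}_t$. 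Positivity of $\mathbf{J}_t$ is inherited from that of $\Phi_t$, since the Hilbert--Schmidt adjoint is positivity-preserving at the level of maps. Matching this against (\ref{eq:jprime})--(\ref{eq:hprime}) also identifies an admissible operator $\mathbf{C}(t)$ realising the change of representation, with self-adjoint real and imaginary parts determined by $\Gamma'(t)-\Gamma(t)$ and $2(H'(t)-H(t))$.

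The hard part is not the differentiation, which is a routine Taylor expansion, but invoking the Lindblad structure theorem outside the semigroup setting in which it is usually stated. The characterisation is, however, algebraic and pointwise in $t$: at each fixed $t$ one has a single linear map $\mathcal{L}_t^\dagger$ on $\BH$ satisfying precisely the assumptions of the theorem, so one applies the construction at each $t$ independently. A secondary concern is the regularity of $t \mapsto H'(t)$ and $t \mapsto \mathbf{J}_t$, which can be inherited from the regularity of $\mathcal{L}_t$ via a suitable selection within the Lindblad decomposition but does not affect the algebraic content of the argument.
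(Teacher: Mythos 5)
Your overall route---differentiate (\ref{KS}) to obtain the generator-level dissipativity condition (\ref{KS-L}), then extract from dissipativity a representation of $\mathcal{L}_t$ with a positive jump map---is the same as the paper's, and the differentiation step (including the observation that the Hamiltonian part drops out) is fine. The problem is that the second step, which is where the entire content of the proposition lies, is disposed of by citing a ``Lindblad characterisation'' according to which any dissipative generator admits a decomposition with a merely \emph{positive} map $\Phi_t$. That is not a result you can simply quote in the form you need: Lindblad's structure theorem in \cite{L} concerns \emph{completely} dissipative generators and delivers a completely positive map, and the statement for plain dissipativity with a positive (not CP) map is precisely what Proposition \ref{prop:ks} asserts; the paper does not cite it but proves it in Appendix \ref{app:propks}, describing its argument only as based on \emph{techniques} introduced by Lindblad. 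Your ``iff''-style phrasing of the characterisation is also a warning sign: a representation with a merely positive $\mathbf{J}_t$ does not give back (\ref{KS-L}) for all $X$ (positive unital maps obey Kadison--Schwarz only on normal elements), which is exactly why dissipativity is strictly stronger than P-divisibility in the main text---a further indication that the merely dissipative case is not covered by the standard semigroup theorem you invoke.

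What is missing is the construction itself, which is short: define $\mathbf{K}(t)=\int_{U(N)}\mathcal{L}_t^\dagger(U^\dagger)U\,dU$ (Haar average over the unitary group); invariance of the Haar measure gives $\int_{U(N)}\mathcal{L}_t^\dagger(XU^\dagger)U\,dU=\mathbf{K}(t)X$ for every $X$; applying (\ref{KS-L}) to $Y=UX$ (so that $Y^\dagger Y=X^\dagger X$) and averaging over $U(N)$ yields $\mathcal{L}^\dagger_t(X^\dagger X)\geq \mathbf{K}(t)X^\dagger X+X^\dagger X\,\mathbf{K}^\dagger(t)$, hence $\mathbf{J}^\dagger_t(X):=\mathcal{L}^\dagger_t(X)-\mathbf{K}(t)X-X\mathbf{K}^\dagger(t)$ satisfies $\mathbf{J}^\dagger_t(X^\dagger X)\geq0$ for all $X$, i.e.\ $\mathbf{J}_t$ is positive; splitting $\mathbf{K}(t)$ into Hermitian and anti-Hermitian parts then reproduces $H'(t)$, $\Gamma'(t)$ and the operator $\mathbf{C}(t)$ of Eqs.~(\ref{eq:jprime})--(\ref{eq:hprime}), exactly the representation claimed. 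If you supply this (or an equivalent derivation of the positive-map decomposition from (\ref{KS-L})), your proof is complete; as written, the crux is assumed rather than proven. Minor point: $\mathcal{L}_t^\dagger$ is not unital but annihilates the identity, $\mathcal{L}_t^\dagger(\mathbbm{1})=0$, being the dual of a trace-annihilating generator.
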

The proof, which is reported in Appendix \ref{app:propks}, is based on some techniques introduced by Lindblad in his seminal paper \cite{L},
based on the fact that the Kadison-Schwarz inequality may be rephrased by the following condition for the time-local generator (in the Heisenberg picture):
\begin{equation}\label{KS-L}
    \mathcal{L}^\dagger_t(X^\dagger X) \geq \mathcal{L}^\dagger_t(X^\dagger)X + X^\dagger \mathcal{L}^\dagger_t(X) ,
\end{equation}
again for all $X \in \BH$ and $t \geq 0$. The generators
satisfying Eq.(\ref{KS-L}) are called dissipative \cite{L}. All in all, the dissipativity condition is equivalent to P-divisibility if one restricts to Hermitian operators $X$; however,  assuming that
the condition (\ref{KS-L}) holds for all $X \in \BH$  the corresponding evolution is not only P-divisible but in addition the  propagator $V_{t,s}$ satisfies~(\ref{KS}). 
%Indeed, if $X=P_\psi$ is a rank-1 projector, then $P_\psi^2=P_\psi$ and (\ref{KS-L}) implies
%$(\oper - P_\psi) \mathcal{L}^\dagger_t(P_\psi) (\oper - P_\psi) \geq 0$, which coincides with the condition for P-divisibility (in Heisenberg picture). 

From the physical point of view, the dissipativity condition can be understood with the following remark: let $\varrho(t)$ be an instantaneous invariant state, i.e. $\mathcal{L}_t(\varrho(t))=0$. One has ${\rm Tr}(\varrho(t) \mathcal{L}^\dagger_t(X^\dagger X)) =  {\rm Tr}(\mathcal{L}_t(\varrho(t)) (X^\dagger X)) = 0$, and hence introducing the following inner product in the space of Hermitian operators 
$$
(X,Y)_\varrho = {\rm Tr}( \varrho X^\dagger Y)
$$ 
the formula (\ref{KS-L})  implies \cite{Rivas2012}
\begin{equation}\label{Diss-product}
 {\rm Re}\, (X, \mathcal{L}^\dagger_t(X))_{\varrho(t)} \leq 0 ,
\end{equation}
for any operator $X \in \BH$. In particular, if $|i(t)\rangle$ defines an eigenbasis of $\varrho(t)$, then taking $X =|i(t)\rangle\langle j(t)|$ the condition \eqref{Diss-product} clearly shows that P-divisibility provides constraints for populations $(i=j)$, whereas dissipativity is 
more restrictive, providing constraints also for coherences $(i \neq j$).

Summarizing, the condition in Eq.(\ref{KS}), or equivalently Eq.(\ref{KS-L}), for all $X \in \BH$ allows for a representation of (\ref{GKLS}) with a positive map $\mathbf{J}_t$.
In other terms, dissipativity guarantees the existence of a rate operator
$\mathbf{R}_{\psi(t)}$, defined from $\mathbf{J}_t$ by Eq.(\ref{R}), such that
the unraveling given by Eqs.(\ref{I}), (\ref{IIb}) and (\ref{IIc}) is well-defined,
as $\mathbf{R}_{\psi(t)} \geq 0$ for any $\psi(t)$.
As follows from the discussion in Sec.\ref{sec:cms} and \ref{sec:dotu}, this also means
that dissipativity ensures the possibility to obtain the dynamics as the result of a continuous
non-selective measurement on the system at hand.
It should be stressed that the representation with a positive map $\mathbf{J}_t$ is not unique,
corresponding to a non-unique continuous-measurement scheme, as we will
see explicitly in the examples in the next section.
The use of the $\mathbf{R}$-ROQJ allows thus for a versatile definition
of the deterministic and jump parts of the unraveling that, along with the dissipativity
of the dynamics, result in distinct experimental procedures leading to 
the detection of trajectories associated with different unraveling of the same master equation.

\section{Eternally non-Markovian qubit master equation}\label{sec:ex}
To explore in an explicit case study the different possible $\mathbf{R}$-ROQJ unravelings
and compare them among each other, as well as with
the $\mathbf{W}$-ROQJ, we consider the two-level system dynamics
fixed by the following generator
\begin{equation}\label{ENM} 
\mathcal{L}_t(\rho) = i\frac{b(t)}{2}[\sigma_z,\rho]+
\frac 12
\sum_{k=1}^3 \gamma_k(t) (\sigma_k \rho \sigma_k - \rho),
\end{equation} 
where $\sigma_k$ are the Pauli spin operators. 
Such a master equation has been studied extensively in the literature \cite{ENM,Chruscinski2015,Nina},
since, despite its simplicity, it possesses several interesting features. 
The open-system evolution fixed by Eq.(\ref{ENM})
can arise as due to an average over randomly distributed unitary evolutions
\cite{Chruscinski2015}
or as the classical mixture of Markovian
dephasing dynamics in three different directions \cite{Nina}.
In particular,
for some choices of the rates $\gamma_k(t)$ the resulting dynamics is P-divisible, while
CP-divisibility is broken for any $t>0$. This kind of dynamics is usually referred to 
as eternally non-Markovian,
indicating that the backflow of information to the open system witnessed
by the negativity of the decay rate continues for the whole evolution;
hence, any Markovian limit, even in the asymptotic time scale, is precluded.
Most importantly for our purpose,
eternally non-Markovian dynamics cannot be treated by means of the MCWF at any time of the evolution.
Even more, also a powerful non-Markovian generalization of MCWF, named
non-Markovian quantum jumps \cite{Piilo2008,Piilo2009}, cannot be applied
in this case since CP-divisibility is violated from the very beginning of the dynamics.
The general ingredients for the implementation of the jump methods, including the relevant numerical aspects, can be found from references \cite{Plenio1998,Daley2014,Molmer1996}.

It is important to stress that the smaller the effective ensemble size 
$N_{eff}$ the more efficiently the simulations can be implemented and optimized. 
This is because it is enough to generate $N_{eff}$ state vector evolutions and decide $N$ times at each time step whether the jump happened or not; 
here, $N$ is the size of the total ensemble. This allows us to track how many members of the total ensemble are in each of the $N_{eff}$ different members of the effective ensemble, 
avoiding repetitions of identical evolutions in a given interval of time.
As we are going to show explicitly for the dynamics at hand, 
we can control $N_{eff}$ by choosing different ROs, which opens considerable prospects for an efficient implementation of the simulations also when dealing with more complicated dynamics.

\subsection{Undriven master equation}\label{sec:und}
We set at first 
$b(t)=0$ and
study the case where we have a P-divisible evolution even if one of
the rates is temporally negative, i.e., CP-divisibility is broken; we
suppose in particular that $\gamma_3(t) < 0$ for any $t>0$. Then
$\Lambda_t$ is P-divisible provided that $\gamma_1(t),\gamma_2(t) \geq
|\gamma_3(t)|$. Dissipativity requires the stronger condition
\cite{KS} that $\gamma_1(t),\gamma_2(t) \geq
2|\gamma_3(t)|$. Interestingly, it turns out that whenever
$\gamma_3(t) < 0$ the map $\mathcal{J}_t(\rho) = \sum_{k=1}^3
\gamma_k(t) \sigma_k \rho \sigma_k/2$ is not positive. However, if the
generator (\ref{ENM}) gives rise to a P-divisible evolution,
one can define different positive $\mathbf{R}$-ROQJ by means
of different operators $\mathbf{C}(t)$ in Eqs.(\ref{eq:jprime})-(\ref{eq:hprime}).

\subsubsection*{Unraveling with $\mathbf{R1}$}
The first choice we make is to use
\begin{equation}
\mathbf{C}(t) = \frac{\gamma(t)}{2} \oper
\end{equation}
with
$$
\gamma(t) = \sum_k \gamma_k(t),
$$
as the corresponding map
\begin{equation}\label{eq.21} 
\mathbf{J}_t(\rho) = \mathcal{J}_t(\rho)+ \frac{\gamma(t)}{2} \rho
\end{equation} 
is positive, as shown in Appendix \ref{app:pr1r2}.
In particular
for the eternally non-Markovian evolution \cite{ENM,Nina}
defined by $\gamma_1=\gamma_2=1$, and $\gamma_3(t) = - {\rm tanh}\,t$,
one has a P-divisible evolution, but the corresponding generator
$\mathcal{L}_t^\dagger = \mathcal{L}_t$ is not dissipative.
Nevertheless, the map $\mathbf{J}_t$ is positive and one can define
the positive rate operator
$\mathbf{R1}_{\psi(t)} =\mathbf{J}_t(|\psi(t)\>\<\psi(t)|) \geq 0$.
The deterministic evolution is fixed by
$K1(t)=\frac{i}{2}\gamma(t)\mathbbm{1}$.  Figure \ref{fig1} (a) shows
7 example pure-state trajectories, with initial state
$|\psi(0)\rangle=\sqrt{0.1} | 1 \rangle + \sqrt{0.9} |2\rangle$, and
the probability $\rho_{11}$ of the state $|1\rangle$.  
Figure \ref{fig1} (b)
displays the final distribution of the Bloch vector X and Z components over the trajectories
while the inset shows the agreement between the
analytical results and simulations for the coherence $\rho_{12}$.

%%%%%%%%%% FIGURE  %%%%%%%%%%
\begin{figure}[t]
\centering
\includegraphics[width=0.99\linewidth]{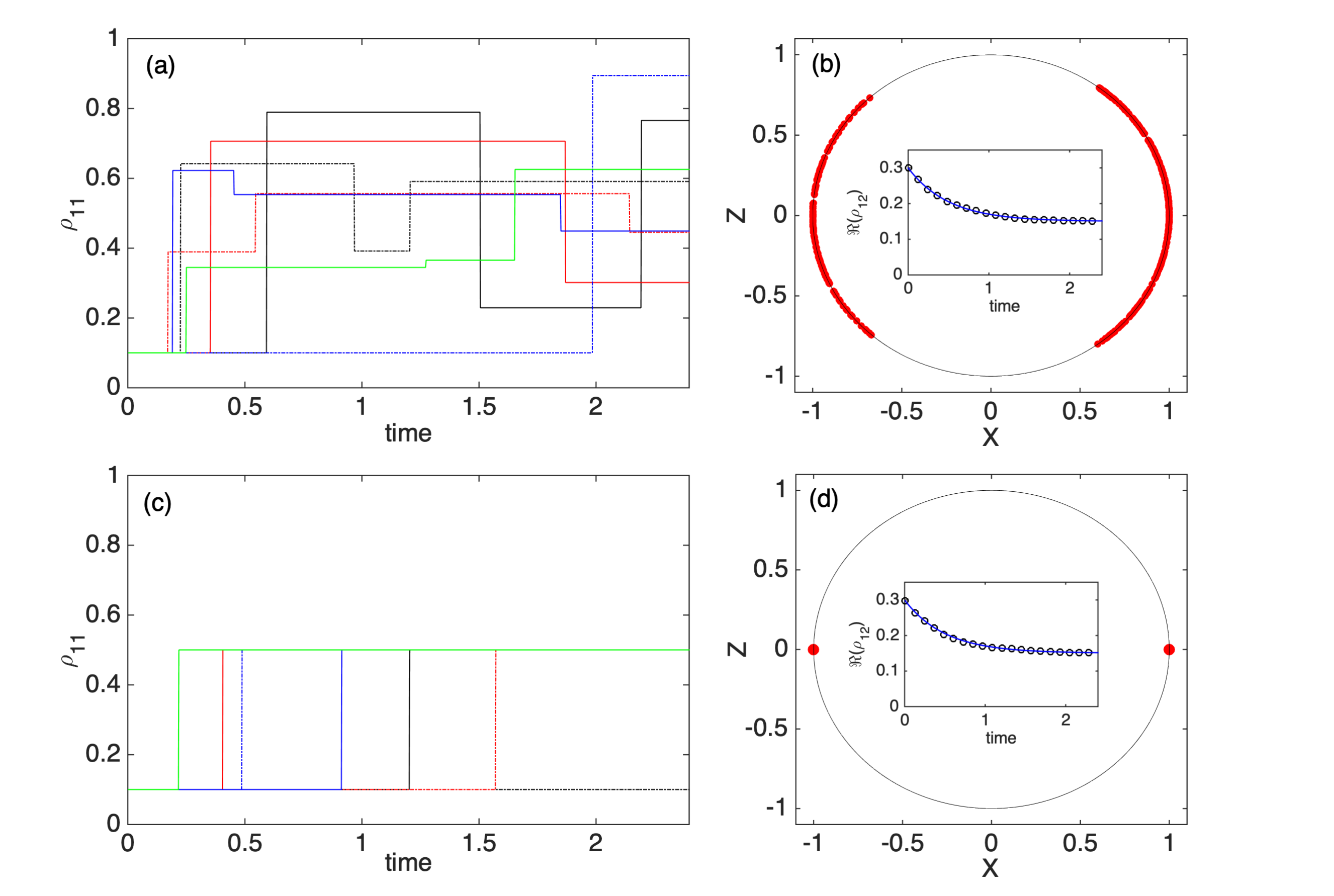}
\caption{Simulation results using rate operator $\mathbf{R1}$ [(a), (b)] and
$\mathbf{R2}$  [(c), (d)] for model~\eqref{ENM} without
driving. Example trajectories are displayed in (a) and (c).
Final steady state distributions for Bloch vector X and Z components are shown in  (b) and (d).
The insets in (b) and (d) demonstrate the agreement between the analytical results (solid line) and simulation results (dots) with $10^4$ trajectories. 
The error bars are similar to the  size of the dots.}
\label{fig1}
\end{figure}
%%%%%%%%%%%%%%%%%%%%%%%%%%%%%%

\subsubsection*{Unraveling with $\mathbf{R2}$}
The second choice we make takes advantage of the
general result for qubits dynamics in Proposition~\ref{prop:jj}. 
In fact, as we show explicitly in Appendix~\ref{app:fnm}, the eternal non-Markovian evolution satisfies the
assumptions of the mentioned proposition, meaning
that it is possible to define a $\mathbf{R}$-ROQJ with a fixed
set of (deterministically evolving) states after the
jumps. Actually, in Appendix \ref{app:fnm} we define a continuous
family of rate operators with fixed post-jump
states for the eternal non-Markovian dynamics,
also ensuring their positivity.
Quite interestingly, this means that there exists a continuous family of
continuous measurement schemes for the same open-system dynamics
originating from the non-unique decomposition (\ref{eq:llp}) of the generator. 
We stress that this freedom is inherently different from the well-known \cite{Breuer} 
unitary freedom in defining the operators that appear
in the master equation~(\ref{GKLS}).
Moreover, the instruments defining the different measurement schemes,
as well as the associated probabilities, will generally be different, while the post-measurement
states will be the same.

A special instance of the $\mathbf{R}$-ROQJs with fixed post-jump
states is given by
%$\mathbf{R2}_{\psi(t)} =
%\mathbf{J}_t(|\psi(t)\rangle\langle\psi(t)|)$, where now
%\begin{equation}
%\label{eq:R2}
 %\mathbf{J}_t(\rho)  = \mathcal{J}_t(\rho) + [\gamma_1(t)+\gamma_2(t)] \rho,
%\end{equation}
\begin{equation}\label{eq:R2}
\mathbf{R2}_{\psi(t)} = \mathbf{R1}_{\psi(t)} -\gamma_3(t) \ket{\psi(t)}\bra{\psi(t)}
\end{equation}
associated with the deterministic evolution $K2(t)=\frac{i}{2}[\gamma_1(t) +\gamma_2(t)]\mathbbm{1}$.
The corresponding simulation results are shown in Fig.~\ref{fig1} (c) and (d).
The example trajectories in Fig.~\ref{fig1} (c) illustrate that the
ensemble consists now a discrete set of pure states.
In the final
distribution of states, see Fig.~\ref{fig1} (d), we have only two
states $|\pm\rangle=\frac{1}{\sqrt{2}}(|1\rangle \pm |2\rangle )$. As
a matter of fact, during the whole simulation, only three states
appear: the initial state and $|\pm\rangle$; in other terms, in this case the 3 states
fixing the unraveling according to Proposition~\ref{prop:jj} are
even time independent. 
Remarkably, $\mathbf{R2}_{\psi(t)}$
explicitly demonstrates that the measurement basis can be fixed once
and for all, without any state and time dependence.

\subsubsection*{Unraveling with $\mathbf{R3}$ and $\mathbf{W}$}
%%%%%%%%%% FIGURE  %%%%%%%%%%
\begin{figure}[t]
\centering
\includegraphics[width=0.99\linewidth]{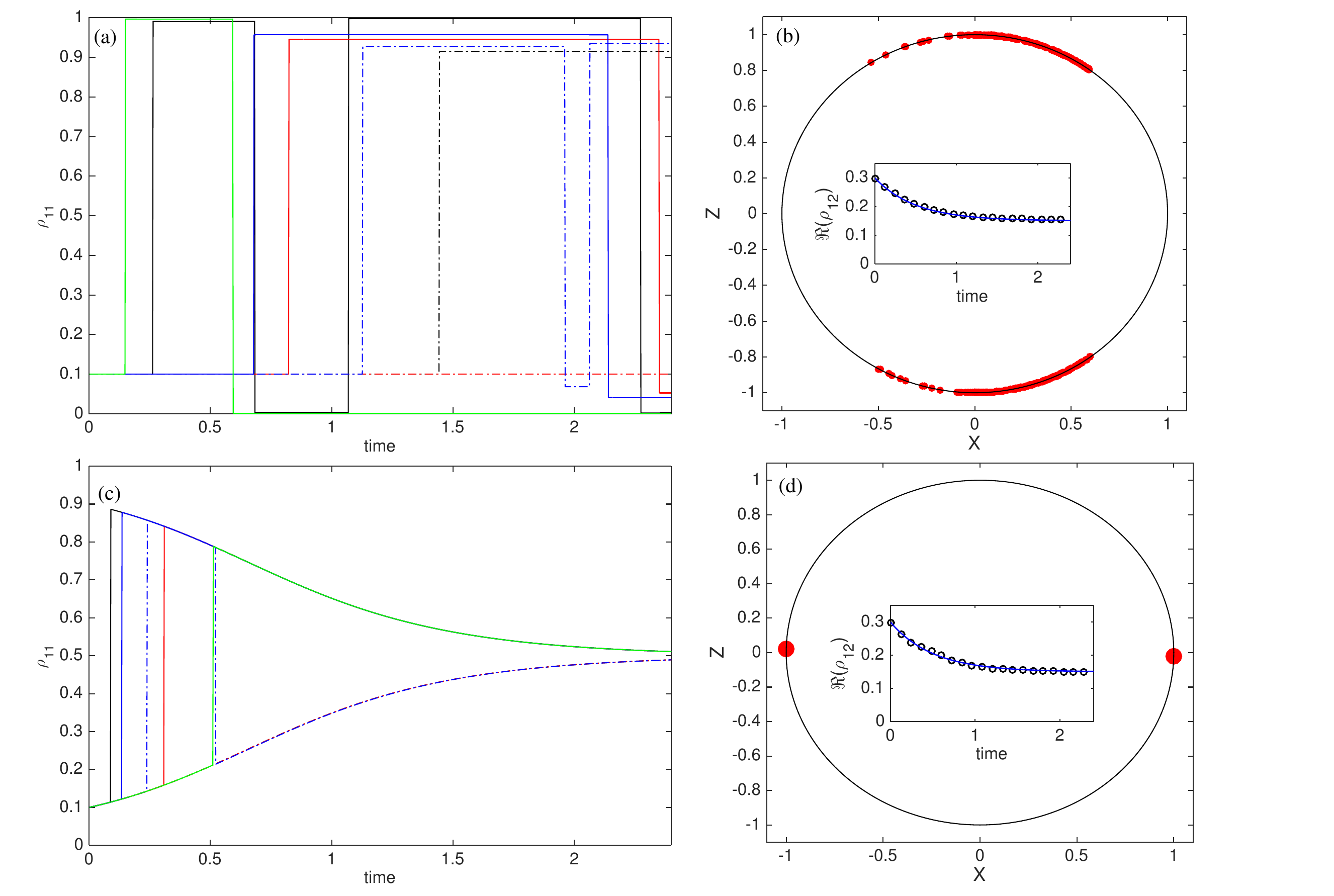}
\caption{Simulation results using rate operator $\mathbf{R3}$ [(a), (b)] and
$\mathbf{W}$  [(c), (d)]. Seven example pure state trajectories are displayed in (a) and (c)
by plotting the probability of state $|1\rangle.$
Final steady state distributions for Bloch vector X and Z components, with 500 trajectories, are shown in  (b) and (d).
Here, each dot corresponds to a single trajectory. In (d), all the 500 dots have the value $Z=0$ and $X=-1/2$ or $X=+1/2$,
i.e., we have only two distinctively visible dots. The insets in (b) and (d) demonstrate the agreement between the analytical and simulation results by using the coherence $\rho_{12}$ and having $10^4$ trajectories. In the insets, the solid line is the analytical result and the dots are the simulation results. The error bars are similar to the  size of the dots.
In all the panels, the initial state is  $|\psi(0)\rangle=\sqrt{0.1} | 1 \rangle + \sqrt{0.9} |2\rangle$ and the used time step size is $dt=0.002$.
  }
\label{figS1}
\end{figure}

Finally, we consider the rate operator
\begin{equation}
\label{eq:R3}
 \mathbf{R3}_\psi= \frac{1}{2}\sum_{k=1}^3 \gamma_k(t) \sigma_k  |\psi(t)\rangle\langle\psi(t)| \sigma_k -  \frac{1}{2} \gamma_3(t) |\psi(t)\rangle\langle\psi(t)|,
\end{equation}
which is positive as shown in Appendix \ref{app:pr1r2}.
The corresponding deterministic evolution is fixed by the linear operator $K3(t)=\frac{i}{4}[\gamma_1(t) +\gamma_2(t)]\mathbbm{1}$.
We show the simulation results in Fig.~\ref{figS1} (a) and (b).
The example realizations in  Fig.~\ref{figS1} (a) demonstrate that, similarly to $\mathbf{R1}$ (cf.~Fig.~\ref{fig1}),
the jumps continue even though the steady state has been reached. However, contrary to $\mathbf{R1}$,
now it holds all the time that $\rho_{11} \leqslant 0.1$ or  $\rho_{11} \geqslant 0.9$. This can be seen more clearly in
Fig.~\ref{figS1} (b). While the distribution of trajectories in the steady state for  $\mathbf{R1}$ contained arcs on the western and eastern sides of the
XZ-plane of the Bloch sphere, now with $\mathbf{R3}$ the distribution covers arcs on the northern and southern sides of the circle.
As before, there is excellent agreement between the analytical and simulation results [see the inset of Fig.~\ref{figS1} (b).]

The last operator we consider is  $\mathbf{W}$, see Eq.~(\ref{ro}).
We display the simulation results in Fig.~\ref{figS1} (c) and (d).
With the example realizations in Fig.~\ref{figS1} (c), one can clearly see that now also the deterministic evolution changes the states.
Moreover, the jumps happen between a pair of states only and terminate when the steady state is reached.
 Fig.~\ref{figS1} (d) shows that similarly to  $\mathbf{R2}$ [cf.~Fig.~\ref{fig1} (d)], all the trajectories eventually end up being on one of two states on the equator of the Bloch sphere. However, how they reach these points is totally different with respect to  $\mathbf{R2}$.
In terms of the measurement scheme, there is another crucial difference between  $\mathbf{W}$ and $\mathbf{R2}$.
With $\mathbf{W}$ the post-measurement states are time dependent, while with $\mathbf{R2}$ they are time independent.

Table \ref{tabS1} collects and compares the properties of all the used four operators in terms of
(i) whether the quantum jumps continue also in the asymptotic regime or terminate when the steady state is reached, (ii) whether the deterministic evolution changes the state of the trajectories, (iii) whether the post-measurement states are time dependent or time independent, and (iv) what is the effective ensemble size in the simulation (how many different kinds of state vectors the ensemble consists of point-wise in time).
Overall, the rate operator $\mathbf{R2}$ has the most appealing properties from simulation and fundamental interpretation points of views.

\begin{table*}[t]
\begin{center}
\begin{tabular}{ccccc}
   \hline
         Rate
       & Asymptotic
       & Deterministic
       & Time-independent
       &Effective ensemble \\
       operator
       & jumps?
       & changes?
       & post-measurement states?
       &size\\

       \hline
     $\mathbf{R1}$& yes  & no  & no & $\infty$\\
       $\mathbf{R2}$& no & no   & yes  & 3 \\
      $\mathbf{R3}$ &yes & no   & no  &  $\infty$\\
      $\mathbf{W}$& no & yes   & no   & 2

     \end{tabular}
     \caption[t2]{\label{tabS1}
The basic characterization of the four used rate operators for stochastic simulations.
Quantum jumps can either continue or terminate when steady state is reached. Depending on the form
of the effective Hamiltonian giving the deterministic evolution, the state of the trajectory can either change or remain unchanged after renormalization. The states after the measurements in the continuous-measurement scheme can be either time dependent or time independent.
The effective ensemble size describes how many different kinds of state vectors the ensemble consists of point-wise in time.}
\end{center}
\end{table*}

\subsection{Driven master equation}\label{sec:dri}
As a second example, we add a time-dependent driving $b(t)$ to
the evolution~(\ref{ENM})
and 
consider the decay rates $\gamma_1=\gamma_2=1$
and
$\gamma_3(t)=-\frac{1}{2}\tanh t$. For the 
driving, we choose
\begin{equation}
b(t) = C+\int\limits_0^t\mathrm{d}s\, \Omega(s;\mu,\sigma)
\end{equation}
including an integral over the Gaussian 
function 
$$
\Omega(s;\mu,\sigma)=\frac{1}{\sqrt{2\pi}\sigma}e^{-(s-\mu)^2/(2\sigma^2)}
$$ 
and a constant 
$$
C=\int_{-\infty}^0\mathrm{d}s\,\Omega(s;\mu,\sigma).
$$ 
At time $t=0$, $b(0)=C$ and $b(t)\to 1$ when 
$t\gg (\mu+\sigma)$;
the amplitude of the drive is ramped up from the initial value $C$ to the asymptotic value 1 
over a time-scale fixed by $\sigma$, essentially modeling a finite time quench of the Hamiltonian for the open system.
%%%%%%%%%% FIGURE  %%%%%%%%%%
\begin{figure}[t]
\centering
\includegraphics[width=0.95\linewidth]{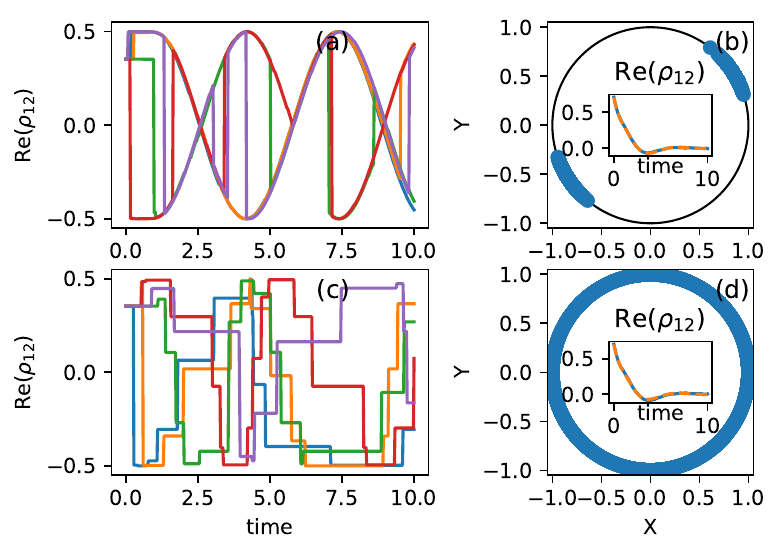}
\caption{Simulation results using rate operator
$\mathbf{R1}$ [(a), (b)] and $\mathbf{R1'}$ [(c), (d)] for the model
~(\ref{ENM}) with time dependent driving. 
Example trajectories are displayed in (a) and (c).
Final steady state distributions for Bloch
vector X and Y components are shown in (b)
and (d).  The
insets in (b) and (d) demonstrate the agreement between the analytical
results (solid line)
and simulation results (dots), with $10^4$ trajectories.
}
\label{fig2}
\end{figure}
%%%%%%%%%%%%%%%%%%%%%%%%%%%%%%
The driving does not affect the positivity of the rate operators $\mathbf{R1}$ and 
$\mathbf{R2}$.  However, we use the decomposition~(\ref{eq:jprime}) and define a new 
rate operator
\begin{align}\label{eq:R1tprime}
  \mathbf{R1'}_{\psi(t)}=\mathbf{R1}_{\psi(t)}+i\frac{b(t)}{2}[\sigma_z,\vert{\psi(t)}\rangle\langle{\psi(t)}\vert],
\end{align}
which fully takes into account the driving, so that the deterministic evolution between the 
jumps will not depend on it. In other terms, as announced earlier, 
we can implement the simulation of time-dependent coherent driving 
with pure jump dynamics (provided that the 
rate operator $\mathbf{R1'}$ is positive).

In Fig.~\ref{fig2} we plot the dynamics for
$\mu=1$, $\sigma=1/4$ and for an initial state $\ket{\psi_0}=\cos\frac{\pi}{8}|1\rangle+\sin\frac{\pi}{8}|2\rangle$.
In panel (a) the coherent driving is clearly visible in the deterministic evolution between the 
jumps when using the rate operator $\mathbf{R1}$. In contrast, in 
panel (c) the sample trajectories do not have any deterministic evolution between the jumps 
since the driving is absorbed by the  
rate operator $\mathbf{R1'}$. In the insets of panels (b) and (d), we verify that we indeed unravel
the dynamics of Eq.~(\ref{ENM}).
The distribution of the states in the long time limit can be understood by looking at
$\varphi(t)=\arctan\frac{y(t)}{x(t)}$, i.e., the angle between the 
$x$- and $y$-axis of the Bloch sphere, measured from the positive $x$-axis. 
In the long time limit, when $b(t)\approx 1$ and all of the trajectories
have almost reached the equator of the Bloch sphere, each jump 
just shifts the phase $\varphi(t_+)=\varphi(t_-)\pm\frac{\pi}{4}$, where
times $t_\pm$ are just after and before the jump, respectively. 
The initial transient period randomizes the phases and the resulting smeared
distribution for the phase, as seen in panel (d) of Fig.~\ref{fig2}, occurs.

\begin{figure}
  \includegraphics[width=0.95\linewidth]{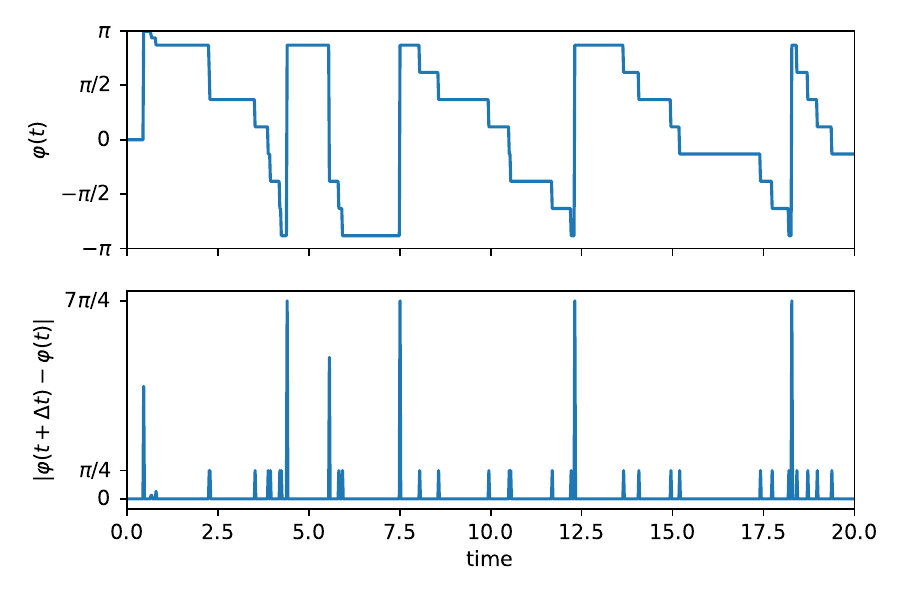}
  \caption{\label{fig:long_driven_trajectory}
   The phase evolution (upper panel) and phase jumps (lower panel) for
  a single trajectory with time dependent driving. We can see that initially 
  there is a transient where the phase jumps are not multiples of $\pi/4$. This 
  leads to the smearing of the ensemble in the long time limit observed in Fig.~\ref{fig2}}.
\end{figure}
\begin{figure}
  \includegraphics[width=0.95\linewidth]{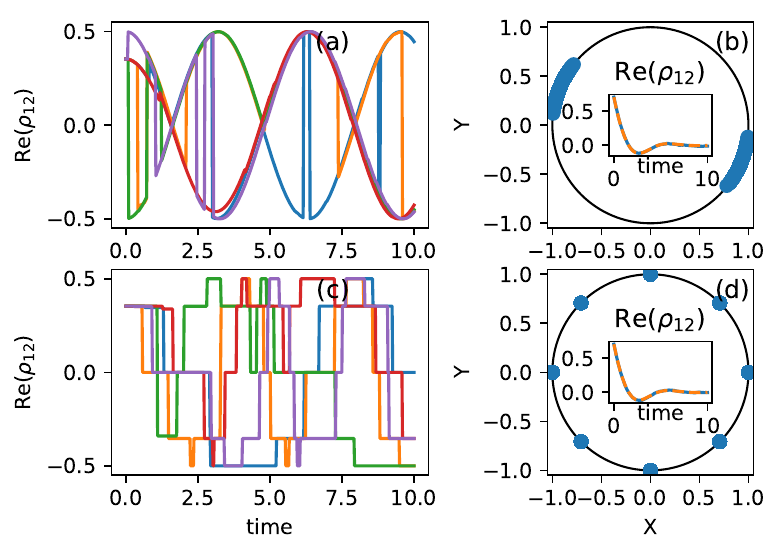}
  \caption{\label{fig:time_independent_driving} Time independent driving
    for $b(t)=1$. In panel a) we plot the dynamics using rate operator
  $\mathbf{R1}$ and the driving term is clearly visible in the real part 
  of the coherences. In panel b) we verify in the inset the validity with the 
  simulations and plot the distribution of Bloch vectors in the ensemble at 
  at the final time. In paneld c) we use the rate operator $\mathbf{R1'}$ and 
  the trajectories contain only jumps. In panel d) we plot the 
  Bloch vectors of the ensemble at the final time. Clearly, the phase
  is rational multiple of $\pi/4$ for every member of the ensemble.}
\end{figure}
For further confirmation, we plot in Figure~\ref{fig:long_driven_trajectory}
$\varphi(t)$ for a
single trajectory. 
We clearly see that after a transient period the phase changes during the jumps
in steps $\frac{\pi}{4}$, which can be understood as follows.
Let $z^2=\cos^2\theta$ be the squared $z$-component of a unit Bloch 
vector in spherical coordinates. The rate operators $\mathbf{R1}$ and $\mathbf{R1'}$ both induce a 
map $z^2\mapsto\frac{\alpha^2z^2}{(2-\alpha)^2}$ when $t\to\infty$. Hence, 
the jumps push the dynamics on the equatorial plane of the Bloch sphere.  Since 
the coherent driving preserves the $z$-component, the pure states of the 
ensemble will be distributed along the equator of the Bloch sphere in the long time 
limit. Any such pure state can be written as 
$\vert\psi\rangle = \frac{1}{\sqrt{2}}(|1\rangle+e^{i\varphi}|2\rangle)$.
The eigenstates of the rate operator $\mathbf{R1'}_{\psi}$ in the long time limit are
$|\phi_{\psi,\pm}\rangle=\frac{1}{2}(|1\rangle \pm e^{-i\varphi}\xi|2\rangle)$, where 
$\xi=\frac{1+ib}{\sqrt{1+b^2}}$. Importantly, $|\xi|=1$ and the
relative phase of the eigenstates is $e^{i\varphi}\xi$. 
In the long time limit, the effect of the quantum jumps corresponds 
to phase jumps  $e^{i\phi}\mapsto\pm e^{i\phi}\xi$.
In the long time limit $b(t)=1$.
Interestingly, the phase jumps for the rate operator $\mathbf{R1'}$ 
in the special case $b=1$ are rational multiples of $\pi$ in 
the long time limit. Namely, 
$\xi=e^{i\pi/4}$ which is the 8th root of unity. 

In Fig.~\ref{fig:time_independent_driving} we plot the dynamics for 
time-independent driving $b(t)=1$ and especially in panel d) we can see 
how in this case the definite phase relations are preserved. The phase $\varphi(t)$ aquires
values which are rational multiples  of $\pi/4$  for every trajectory of the ensemble.

\section{Conclusions}\label{sec:con} 
We introduced a class of
quantum-jump unravelings (denoted as $\mathbf{R}$-ROQJ) that
 interpolates between MCWF for CP-divisible dynamics and the rate-operator
 unraveling for P-divisible evolutions \cite{Caiaffa2017,Jyrki-2020}.
$\mathbf{R}$-ROQJ guarantees a well-defined
continuous-measurement scheme for the class of quantum evolutions defined by 
the dissipativity condition  (\ref{KS-L}), which
is strictly stronger than P-divisibility.
Furthermore, $\mathbf{R}$-ROQJ 
takes full advantage of the freedom one has in dividing the master equation into
a determinisitic and a jump part, by setting a desired time-independent linear non-Hermitian Hamiltonian,
even in the case of an external driving, or selecting a fixed post-jump basis.

%In addition to fundamental insight to open system dynamics, our results also allow us to simplify numerical simulations for practical purposes.
The approach put forward here will likely be useful also to deal with more general classes
of non-Markovian dynamics,
by including reversed jumps in the unraveling \cite{Piilo2008,Piilo2009}. 
In \cite{Jyrki-2020}, reversed jumps for the standard ROQJ were defined, 
thus showing that they can be integrated effectively with ROQJ. 
Even more, the combination of reversed jumps and $\mathbf{R}$-ROQJ 
will allow us to fully exploit the freedom in manipulating the deterministic and jump parts of the unraveling,
without having to restrict to rate operators with positive eigenvalues.

Taking into account more complex, higher dimensional open-system dynamics, it will
be possible to perform a detailed comparison of the efficiency of our method
with respect not only to different jump-based unravelings, but also to other techniques
to solve general master equations.
In addition, the definition of different unravelings based on distinct master-equation representations 
associated with the very same dynamics
might lead to novel insights to connect
the jumpless part of the evolution of an open quantum system to 
effective non-Hermitian Hamiltonians and their engineering \cite{Naghiloo2019,Minganti2019,Minganti2020,Ashida2020,Chen2021,Roccati2022}.
Finally, in future work  we will consider the diffusive limit of
  $\mathbf{R}$-ROQJ unravelings; with a similar approach as in
  Ref.~\cite{Luoma2020}, we could interpolate between
  piecewise deterministic and continuous quantum trajectories for
  the $\mathbf{R}$-ROQJ, and also between the ROQJ quantum trajectory descriptions
  presented in ~\cite{Caiaffa2017,Jyrki-2020}

\vspace{1cm}

\begin{acknowledgments}
We would like to thank Lajos Di{\'o}si for very interesting discussions
on the topic and for drawing our attention on the papers \cite{Braun2000,Braun2002}.
D.C. was supported by the Polish National Science
Centre project 2018/30/A/ST2/00837.
J.P. acknowledges support from Magnus Ehrnrooth Foundation.
A.S. acknowledges funding by the FFABR project of MIUR and PSR-2 2020 by UNIMI.
\end{acknowledgments}

\bibliographystyle{unsrtnat}

\begin{thebibliography}{11}

\bibitem{Breuer} H.-P. Breuer and F. Petruccione,
\href{https://doi.org/10.1093/acprof:oso/9780199213900.001.0001}{{\em The Theory of Open Quantum Systems}} (Oxford Univ. Press,
Oxford, 2007).
\bibitem{Carmichael} H.J. Carmichael,\emph{An Open System Approach to Quantum Optics},
 \href{https://doi.org/10.1007/978-3-540-47620-7}{Lectures Notes in Physics (Springer, Berlin, 1993)}.
\bibitem{Dalibard1992} J. Dalibard, Y. Castin, and K. M{\o}lmer, 
\href{https://doi.org/10.1103/PhysRevLett.68.580}{Phys. Rev. Lett. {\bf68}, 580 (1992)}.
\bibitem{Basche1995}T. Basche, S. Kummer, and C. Brauchle, 
\href{https://doi.org/10.1038/373132a0}{Nature {\bf373}, 132 (1995)}.
\bibitem{Peil1999}S. Peil and G. Gabrielse, 
\href{https://doi.org/10.1103/PhysRevLett.83.1287}{Phys. Rev. Lett. {\bf83}, 1287 (1999)}.
\bibitem{Jelezko2002}F. Jelezko, I. Popa, A. Gruber, C. Tietz, J. Wrachtrup, A. Nizovtsev, and S. Kilin, \href{https://doi.org/10.1063/1.1507838}{Appl. Phys. Lett. {\bf81}, 2160 (2002)}.
\bibitem{Gleyzes2007} S. Gleyzes, S. Kuhr, C. Guerlin, J. Bernu, S. Del{\'e}glise, U.B. Hoff, M. Brune, J.-M. Raimond, and S. Haroche, 
\href{https://doi.org/10.1038/nature05589}{Nature {\bf446}, 297 (2007)}.
\bibitem{Vijay2011}R. Vijay, D. H. Slichter, and I. Siddiqi, 
\href{https://doi.org/10.1103/PhysRevLett.106.110502}{Phys. Rev. Lett. {\bf106}, 110502 (2011)}.
\bibitem{Minev2019}Z. K. Minev, S. O. Mundhada, S. Shankar, P. Reinhold, R. Guti{\'e}rrez-J{\'a}uregui, R.J. Schoelkopf, M. Mirrahimi, H. J. Carmichael, and M.H.~Devoret, 
\href{https://doi.org/10.1038/s41586-019-1287-z}{Nature {\bf570}, 200 (2019)}. 
\bibitem{Plenio1998} M. B. Plenio and P. L. Knight, 
\href{https://doi.org/10.1103/RevModPhys.70.101}{Rev. Mod. Phys. {\bf70}, 101 (1998)}.
\bibitem{Daley2014} A.J. Daley, 
\href{https://doi.org/10.1080/00018732.2014.933502}{Adv. Phys. {\bf 63}, 77 (2014)}.
\bibitem{Percival2002} I.Percival, \emph{Quantum State Diffusion} (Cambridge University Press, Cambridge, England, 2002)
\bibitem{Barchielli2009} A. Barchielli and M. Gregoratti, \emph{Quantum Trajectories and Measurements in Continuous Time: The Diffusive Case}, 
\href{https://doi.org/10.1007/978-3-642-01298-3}{Lecture Notes in Physics 782 (Springer, Berlin, 2009)}.
\bibitem{Wiseman1993} H.M. Wiseman and G.J. Milburn, 
\href{https://doi.org/10.1103/PhysRevA.47.1652}{Phys. Rev. A {\bf47}, 1652 (1993)}. 
\bibitem{Strunz1999} W. T. Strunz, L. Di\'osi, and N. Gisin, 
\href{https://doi.org/10.1103/PhysRevLett.82.1801}{Phys. Rev. Lett. {\bf82}, 1801 (1999)}.
\bibitem{Yu1999} T. Yu, L. Di\'osi,  N. Gisin, and W. T. Strunz,  
\href{https://doi.org/10.1103/PhysRevA.60.91}{Phys. Rev. A {\bf60}, 91 (1999)}.
\bibitem{Luoma2020} K. Luoma, W.T. Strunz, and J. Piilo, 
\href{https://doi.org/10.1103/PhysRevLett.125.150403}{Phys. Rev. Lett. {\bf 125}, 150403 (2020)}.
\bibitem{Siddiqi2013} K. W. Murch, S. J. Weber, C. Macklin, and  I.~Siddiqi, 
\href{https://doi.org/10.1038/nature12539}{Nature {\bf 502}, 211 (2013)}. 
\bibitem{Huard2016} P. Campagne-Ibarcq, P. Six, L. Bretheau, A.~Sarlette, M. Mirrahimi, P. Rouchon, and B. Huard, 
\href{https://doi.org/10.1103/PhysRevX.6.011002}{Phys. Rev. X {\bf 6}, 011002 (2016)}.  
\bibitem{Gourgy2016} S. Hacohen-Gourgy, L.S.~Martin, E. Flurin, V.V.~Ramasesh, K.B. Whaley, and 
I. Siddiqi, 
\href{https://doi.org/10.1038/nature19762}{Nature {\bf 538}, 491 (2016)}.
\bibitem{Ficheux2018} Q. Ficheux, S. Jezouin, Z. Leghtas, and  B. Huard, 
\href{https://doi.org/10.1038/s41467-018-04372-9}{Nat. Comm. {\bf 9}, 1926 (2018)}.
\bibitem{Barchielli1991} A. Barchielli and V.P. Belavkin, 
\href{https://doi.org/10.1088/0305-4470/24/7/022}{J. Phys. A: Math. Gen. {\bf24}, 1495 (1991)}.
\bibitem{Laine2010} E.-M. Laine, J. Piilo, and H.-P. Breuer, 
\href{https://doi.org/10.1103/PhysRevA.81.062115}{Phys. Rev. A {\bf81}, 062115 (2010)}.
\bibitem{Chruscinski2011} D. Chrusci{\'n}ski, A. Kossakowski, and {\'A}. Rivas, 
\href{https://doi.org/10.1103/PhysRevA.83.052128}{Phys. Rev. A {\bf 83}, 052128 (2011)}.
\bibitem{Rivas2012} {\'A}. Rivas and S. F. Huelga, 
\href{https://doi.org/10.1007/978-3-642-23354-8}{\emph{Open Quantum Systems} (Springer, New York, 2012)}.
\bibitem{Rivas2010} {\'A}. Rivas, S. F. Huelga, and M. B. Plenio, 
\href{https://doi.org/10.1103/PhysRevLett.105.050403}{Phys. Rev. Lett. {\bf105}, 050403 (2010)}.
\bibitem{Rivas2014}{\'A}. Rivas, S. F. Huelga, and M. B. Plenio, 
\href{https://doi.org/10.1088/0034-4885/77/9/094001}{Rep. Prog. Phys. {\bf77}, 094001 (2014)}.
\bibitem{Breuer2009} H.-P. Breuer, E.-M. Laine, and J. Piilo, 
\href{https://doi.org/10.1103/PhysRevLett.103.210401}{Phys. Rev. Lett. {\bf103}, 210401 (2009)}.
\bibitem{Breuer2016} H.-P. Breuer, E.-M. Laine, J. Piilo, and B.~Vacchini, 
\href{https://doi.org/10.1103/RevModPhys.88.021002}{Rev. Mod. Phys. {\bf88}, 021002 (2016)}.
\bibitem{Piilo2008} J. Piilo, S. Maniscalco, K. H{\"a}rk{\"o}nen, and K.A.~Suominen, 
\href{https://doi.org/10.1103/PhysRevLett.100.180402}{Phys. Rev. Lett. {\bf100}, 180402 (2008)}.
\bibitem{Piilo2009} J. Piilo, K. H{\"a}rk{\"o}nen, S. Maniscalco, and K.A.~Suominen, 
\href{https://doi.org/10.1103/PhysRevA.79.062112}{Phys. Rev. A {\bf79}, 062112 (2009)}.
\bibitem{Gambetta2003} J. Gambetta and H.M. Wiseman, 
\href{https://doi.org/10.1103/PhysRevA.68.062104}{Phys. Rev. A {\bf68}, 062104 (2003)}.
\bibitem{Diosi2008} L. Di{\'o}si, 
\href{https://doi.org/10.1103/PhysRevLett.100.080401}{Phys. Rev. Lett. {\bf100}, 080401 (2008)}.
\bibitem{Wiseman2008} H.M. Wiseman and J.M. Gambetta, 
\href{https://doi.org/10.1103/PhysRevLett.101.140401}{Phys. Rev. Lett. {\bf101}, 140401 (2008)}.
\bibitem{Jyrki-2020} A. Smirne, M. Caiaffa, and J. Piilo, 
\href{https://doi.org/10.1103/PhysRevLett.124.190402}{Phys. Rev. Lett. {\bf 124}, 190402 (2020)}.
\bibitem{Diosi1985} L. Di{\'o}si, 
\href{https://doi.org/10.1016/0375-9601(85)90342-1}{Phys. Lett. A {\bf112}, 288 (1985)}.
\bibitem{Diosi1986} L. Di{\'o}si, 
\href{https://doi.org/10.1016/0375-9601(86)90692-4}{Phys. Lett. A {\bf114}, 451 (1986)}.
\bibitem{Diosi1988} L. Di{\'o}si, 
\href{https://doi.org/10.1088/0305-4470/21/13/013}{J. Phys. A {\bf21}, 2885 (1988)}.
\bibitem{Gisin1990} N. Gisin, 
\href{http://doi.org/10.5169/seals-116244}{Helv. Phys. Acta {\bf63}, 929 (1990)}.
\bibitem{Vacchini2011} B. Vacchini, A. Smirne, E.-M. Laine, J. Piilo, H.P.~Breuer, 
\href{https://doi.org/10.1088/1367-2630/13/9/093004}{New J. Phys. {\bf13}, 093004 (2011)}.
\bibitem{Sabrina-PRL} D. Chru\'sci\'nski and S. Maniscalco, 
\href{https://doi.org/10.1103/PhysRevLett.112.120404}{Phys. Rev. Lett. {\bf 112}, 120404 (2014)}.
\bibitem{Wissmann2015} S. Wi{\ss}mann, H.-P. Breuer, B. Vacchini, 
\href{https://doi.org/10.1103/PhysRevA.92.042108}{Phys. Rev. A {\bf92}, 042108 (2015)}.
\bibitem{Wiseman2010} H. M. Wiseman and G. J. Milburn, 
\href{https://doi.org/10.1017/CBO9780511813948}{\emph{Quantum Measurement and Control}} (CUP, Cambridge, 2010).
\bibitem{Zhang2017} J. Zhangab, Y.-X. Liu, R.-B. Wuab, K. Jacobs, and F. Nori,
\href{https://doi.org/10.1016/j.physrep.2017.02.003}{Phys. Rep. {\bf 679}, 1 (2017)}.
\bibitem{Gourgy2018} S. Hacohen-Gourgy, L. P. Garc{\`i}a-Pintos, L.S.~Martin, J. Dressel, 
and I. Siddiqi, 
\href{https://doi.org/10.1103/PhysRevLett.120.020505}{Phys. Rev. Lett. {\bf 120}, 020505 (2018)}.
\bibitem{Martin2020} L.S. Martin, W.P. Livingston, S. Hacohen-Gourgy, H.M. Wiseman and
I. Siddiqi, 
\href{https://doi.org/10.1038/s41567-020-0939-0}{Nat. Phys. {\bf 16}, 1046 (2020)}.
\bibitem{Magrini2021} L. Magrini, P. Rosenzweig, C. Bach, A.~Deutschmann-Olek,
S.G. Hofer, S. Hong, N. Kiesel, A. Kugi, and
M. Aspelmeyer, 
\href{https://doi.org/10.1038/s41586-021-03602-3}{Nature {\bf 595}, 373 (2021)}.
\bibitem{L}  G. Lindblad, 
\href{https://doi.org/10.1007/BF01608499}{Comm. Math. Phys. {\bf 48}, 119 (1976)}.
\bibitem{GKS} V. Gorini, A. Kossakowski, and E.C.G.~Sudarshan, 
\href{https://doi.org/10.1063/1.522979}{J. Math. Phys. {\bf 17}, 821 (1976)}.
\bibitem{Chruscinski2010} D. Chrusci{\'n}ski, and A. Kossakowski, 
\href{https://doi.org/10.1103/PhysRevLett.104.070406}{Phys. Rev. Lett. {\bf 104}, 070406 (2010)}.
\bibitem{Caiaffa2017} M. Caiaffa, A. Smirne, and A. Bassi, 
\href{https://doi.org/10.1103/PhysRevA.95.062101}{Phys. Rev. A {\bf95}, 062101 (2017)}.
\bibitem{Braun2000} T.A. Brun, 
\href{https://doi.org/10.1103/PhysRevA.61.042107}{Phys. Rev. A {\bf 61}, 042107 (2000)}.
\bibitem{Braun2002} T.A. Brun, 
\href{https://doi.org/10.1119/1.1475328}{Am. J. Phys. {\bf 70}, 719 (2002)}.
\bibitem{Diosi2017} L. Di{\'o}si, 
\href{https://doi.org/10.1088/1751-8121/aa6263}{J.Phys. A {\bf 50}, 16LT01 (2017)}.
\bibitem{ENM} M.J.W. Hall, J.D. Cresser, L. Li, and E.~Andersson, 
\href{https://doi.org/10.1103/PhysRevA.89.042120}{Phys. Rev. A {\bf 89}, 042120 (2014)}.
\bibitem{Chruscinski2015} D. Chru{\'s}ci{\'n}ski and F.A. Wudarski, 
\href{https://doi.org/10.1103/PhysRevA.91.012104}{Phys. Rev. A {\bf91}, 012104 (2015)}.
\bibitem{Nina} N. Megier, D. Chruscinski, J. Piilo, and W.T.~Strunz, 
\href{https://doi.org/10.1038/s41598-017-06059-5}{Sci. Rep. {\bf 7},  6379 (2017)}.
\bibitem{Heinosaari2012} T. Heinosaari and M. Ziman, 
\href{https://doi.org/10.1017/CBO9781139031103}{\emph{The Mathematical Language of Quantum Theory}}, (Cambridge University Press, Cambridge, 2012).
\bibitem{Wiseman1996} H.M. Wiseman, 
\href{https://doi.org/10.1088/1355-5111/8/1/015}{Quantum Semiclass. Opt. {\bf 8}, 205 (1996)}.
\bibitem{Paulsen} V. Paulsen, 
\href{https://doi.org/10.1017/CBO9780511546631}{{\em Completely Bounded Maps and Operator
Algebras}} (Cambridge University Press, Cambridge, 2003).
\bibitem{Stormer} E. St{\o}rmer, 
\href{https://doi.org/10.1007/978-3-642-34369-8}{\emph{Positive Linear Maps of Operator Algebras}},
Springer Monographs in Mathematics (Springer, New York,
2013).
\bibitem{Molmer1996} K. M{\o}lmer and Y. Castin, 
\href{https://doi.org/10.1088/1355-5111/8/1/007}{Quantum Semiclass. Opt. {\bf 8}, 49 (1996)}.
\bibitem{KS} D. Chru\'sci\'nski and F. Mukhamedov, 
\href{https://doi.org/10.1103/PhysRevA.100.052120}{Phys. Rev. A. {\bf 100}, 052120 (2019)}.
\bibitem{Naghiloo2019} M. Naghiloo, M. Abbasi, Yogesh N. Joglekar, and K. W. Murch, 
\href{https://doi.org/10.1038/s41567-019-0652-z}{Nat. Phys. {\bf 15}, 1232 (2019)}.
\bibitem{Minganti2019} F. Minganti, A. Miranowicz, R. W. Chhajlany, and F. Nori, 
\href{https://doi.org/10.1103/PhysRevA.100.062131}{Phys. Rev. A {\bf 100}, 062131 (2019)}.
\bibitem{Minganti2020} F. Minganti, A. Miranowicz, R. W. Chhajlany, I. I. Arkhipov, and F. Nori,
\href{https://doi.org/10.1103/PhysRevA.101.062112}{Phys. Rev. A {\bf 101}, 062112 (2020)}.
\bibitem{Ashida2020} Y. Ashida, Z. Gong, and M. Ueda,
\href{https://doi.org/10.1080/00018732.2021.1876991}{Adv. Phys. {\bf 69}, 3 (2020)}.
\bibitem{Chen2021} W. Chen, M. Abbasi, Y. N. Joglekar, and K. W. Murch,
\href{https://doi.org/10.1103/PhysRevLett.127.140504}{Phys. Rev. Lett. {\bf 127}, 140504 (2021)}.
\bibitem{Roccati2022} F. Roccati, G.M. Palma, F. Bagarello, and F. Ciccarello
\href{https://doi.org/10.1142/S1230161222500044}{Op. Syst. Inf. Dyn. {\bf 29}, 2250004 (2022)}.
\end{thebibliography}

\appendix
\numberwithin{equation}{section}
\onecolumngrid

\section{Proof of Proposition \ref{prop:jj}}
Consider $\mathcal{H}_S =\mathbbm{C}^2$ and a master equation as in Eq.(\ref{GKLS});
moreover assume that there is a basis $\left\{\ket{\varphi_1},\ket{\varphi_2}\right\}$
and a linear operator $\mathbf{C}(t)$ on $\mathbbm{C}^2$
such that Eqs.(\ref{eq:cond2}) and (\ref{eq:cond1}) hold.
First note that Eq.(\ref{eq:cond2}), along with the Hermiticity preservation condition, imply
(we neglect from now on the time dependence)
$J^{k l}_{i j} = J^{l k}_{j i}$. Using the latter, it is 
readily shown that the action of $\mathcal{J}'_t$ in Eq.(\ref{eq:jprime})
with $\mathbf{C}$ defined as in Eq.(\ref{eq:a}) can be written as
\begin{equation}
  \mathcal{J}'_t(\overline{\rho})= \mathcal{J}_t(\overline{\rho}) 
  + \frac{1}{2}\left(\mathbf{C} \overline{\rho} +\overline{\rho} \mathbf{C}^\dagger\right) 
  =  \left( \begin{array}{cc} 
\alpha& \beta \\
\beta  & \alpha
\end{array} \right) \quad  
\begin{cases}
      \alpha = & (2 J^{12}_{11}+y)\rho_{12}+J^{22}_{11}\rho_{22}+J^{11}_{22}\rho_{11}\\
      \beta  = & (J^{11}_{12}+\frac{y}{2}+J^{12}_{11}+J^{12}_{22})\rho_{11}
      +(J^{22}_{12}+\frac{y}{2})\rho_{22} \\
      & + \frac{1}{2}(2 J^{12}_{12}+2 J^{21}_{12}+J^{11}_{22}-J^{11}_{11}+J^{22}_{11}-J^{22}_{22})\rho_{12}
    \end{cases} \label{eq:conaa}
\end{equation}
for any state $\overline{\rho}$ such that 
\begin{equation}\label{eq:cond12}
\rho_{12}=\rho_{21},
\end{equation}
where indeed $\rho_{ij} = \bra{\varphi_i}\overline{\rho}\ket{\varphi_j}$
(i.e., there are no phases with respect to the selected basis).

Now, the crucial observation is that the matrix in Eq.(\ref{eq:conaa}) has \emph{fixed} eigenvectors:
for any values of $\alpha$ and $\beta$, they are in fact given by
\begin{equation}\label{eq:eigenv}
\ket{\varphi_{\pm}} = \frac{1}{\sqrt{2}}\left(\ket{\varphi_1}\pm \ket{\varphi_2}\right),
\end{equation}
%on the other hand, the eigenvalues indeed do depend on $\alpha$ and $\beta$, being
%\begin{equation}
%\lambda_{\pm}= \alpha \pm | \beta |.
%\end{equation}
which directly implies the statement to be proven, whenever
the rate operator is positive if referred to pure states satisfying Eq.(\ref{eq:cond12}):
\begin{itemize}
\item as said, we choose an initial state $\ket{\psi_0}$ that satisfies Eq.(\ref{eq:cond3});
\item but then, due to Eq.(\ref{eq:cond1}), the state $\ket{\psi_0(t_1)}$
before the first jump at time $t_1$ (recall that $D(t,s)$ is defined in Eq.(\ref{eq:det}))
$$
\frac{D(t_1,0) \ket{\psi_0}}{\left\|D(t_1,0) \ket{\psi_0}\right\|}
$$
will be such that Eq.(\ref{eq:cond12}) holds,
so that the jump operator 
$\mathbf{R}_{\psi_0(t_1)} = \mathcal{J}'_t(\ket{\psi_0(t_1)}\bra{\psi_0(t_1)})$ fixed by Eq.(\ref{eq:jprime})
with $\mathbf{C}(t)$ as in Eq.(\ref{eq:a}) has the two eigenvectors $\ket{\varphi_{\pm}}$
in Eq.(\ref{eq:eigenv});
\item thus, after the jump (that is well defined due to the positivity of the rate operator) the system will be either in $\ket{\varphi_+}$ or in $\ket{\varphi_-}$;
\item since also the latter are in the form as in Eq.(\ref{eq:cond3})
the state before the second jump,
$$
\frac{D(t_2,t_1) \ket{\varphi_\pm}}{\left\|D(t_2,t_1) \ket{\varphi_\pm}\right\|},
$$
will still satisfy Eq.(\ref{eq:cond12}),
so that the state after the second jump will be
either $\ket{\varphi_+}$ or $\ket{\varphi_-}$, and so on.
\end{itemize}
All in all, we have the 3 possible families of states
$\left\{\ket{\psi_0(t)}, \ket{\varphi_{+}(t,s)},\ket{\varphi_-(t,s)}\right\}$
that are obtained from $\left\{\ket{\psi_0}, \ket{\varphi_{+}},\ket{\varphi_-)}\right\}$ 
via the deterministic evolution in Eq.(\ref{eq:det}); the instants of occurrence of the jumps
will determine which parts of the deterministic evolution will be actually involved in each trajectory.

\section{Proof of Proposition \ref{prop:ks}}\label{app:propks}
%\label{}

Defining
\begin{equation}%\label{}
  \mathbf{K}(t) = \int_{U(N)} \mathcal{L}_t^\dagger(U^\dagger)U dU ,
\end{equation}
one finds for an arbitrary $V \in U(N)$
\begin{equation}%\label{}
  \int_{U(N)} \mathcal{L}_t^\dagger(VU^\dagger)U dU =  \int_{U(N)} \mathcal{L}_t^\dagger({U'}^\dagger) U' V dU' =   \mathbf{K}(t)V ,
\end{equation}
with $U' = UV^\dagger$. Hence for an arbitrary system operator $X$ one has
\begin{equation}%\label{}
  \int_{U(N)} \mathcal{L}_t^\dagger(XU^\dagger)U dU =  \mathbf{K}(t)X .
\end{equation}

Now we use the dissipativity condition, Eq.(\ref{KS-L}),
\begin{equation}%\label{}
    \mathcal{L}^\dagger_t(Y^\dagger Y) \geq \mathcal{L}^\dagger_t(Y^\dagger)Y + Y^\dagger \mathcal{L}^\dagger_t(Y) ,
\end{equation}
for $Y = UX$, with $U \in U(N)$. It leads to
\begin{eqnarray}%\label{}
    \mathcal{L}^\dagger_t(Y^\dagger Y) \geq  \mathcal{L}^\dagger_t(X^\dagger U^\dagger) UX + X^\dagger U^\dagger \mathcal{L}^\dagger_t(UX)  
    = \mathcal{L}^\dagger_t(X^\dagger U^\dagger) UX + X^\dagger [\mathcal{L}^\dagger_t(X^\dagger U^\dagger) U]^\dagger .
\end{eqnarray}
Averaging over $U(N)$ yields 
\begin{eqnarray}%\label{}
  \int_{U(N)}  \mathcal{L}^\dagger_t(Y^\dagger Y) dU \geq  \int_{U(N)} \mathcal{L}^\dagger_t(X^\dagger U^\dagger) UdU\, X + + X^\dagger \left( \int_{U(N)}\mathcal{L}^\dagger_t(X^\dagger U^\dagger) U DU \right)^\dagger = \mathbf{K}(t)X^\dagger X +  X^\dagger X \mathbf{K}^\dagger(t) ,
\end{eqnarray}
and hence defining a linear map as
\begin{equation}%\label{}
  \mathbf{J}^\dagger_t(X) =  \mathcal{L}_t^\dagger(X) - (\mathbf{K}(t) X + X \mathbf{K}^\dagger(t)) ,
\end{equation}
we have shown that
\begin{equation}%\label{}
  \mathbf{J}^\dagger_t(X^\dagger X) \geq 0,
\end{equation}
which proves that the map $\mathbf{J}_t$ is positive. 

\section{Positivity of the rate operators $\mathbf{R1}_{\psi(t)} $,  $\mathbf{R2}_{\psi(t)}$ and $\mathbf{R3}_{\psi(t)}$}
\label{app:pr1r2}

Here, we prove the positivity of $\mathbf{R1}_{\psi(t)} $, $\mathbf{R2}_{\psi(t)}$, and  $\mathbf{R3}_{\psi(t)}$ defined in Sec.\ref{sec:ex}, fixed by Eqs.(\ref{eq.21}), (\ref{eq:R2}) and (\ref{eq:R3}), assuming only that the $\gamma_k(t)$ satisfy P-divisibility condition, that is, $\gamma_i(t) + \gamma_j(t) \geq 0$ for $i \neq j = 1,2,3$ and that $\gamma_3(t) < 0$, so that P-divisibility means that
$$   \gamma_k(t) + \gamma_3(t) \geq 0 \ , \ k=1,2. $$
Note that the eternal non-Markovian dynamics considered in the main text is a special case of this
class of dynamics.

First, we note that the qubit generator
\begin{equation}%\label{}
  \mathcal{L}_t(\rho) = \frac 12 \sum_{k=1}^3 \gamma_k(t) (\sigma_k \rho \sigma_k - \rho),
\end{equation}
may be rewritten as
\begin{equation}%\label{}
  \mathcal{L}_t(\rho) =  \sum_{k=1}^3 \gamma_k(t) (J_k(\rho) - \rho),
\end{equation}
where
\begin{equation}%\label{}
  J_k(\rho) = \frac 12 (\sigma_k \rho \sigma_k + \rho) ,
\end{equation}
are CPTP for $k=1,2,3$. 
We then have the following result.

\begin{Proposition}
The  map (cf. Eq.~(\ref{eq.21})) 
\begin{equation}%\label{}
  \mathbf{J}_t(\rho) = \sum_{k=1}^3 \gamma_k(t) J_k(\rho)  = \mathcal{J}_t(\rho) + \frac{\gamma(t)}{2} \rho
\end{equation}
is positive. 
\end{Proposition}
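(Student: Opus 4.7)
The plan is to reduce positivity of $\mathbf{J}_t$ to a concrete spectral inequality on qubit Bloch vectors. Since $\mathbf{J}_t$ is linear, it suffices to verify that $\mathbf{J}_t(P)\geq 0$ for every rank-one projector $P=|\psi\>\<\psi|$; any positive operator can be written as a convex combination of such projectors, so positivity on pure states propagates to all positive operators.

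First I would parametrize the pure state via its Bloch vector $\vec{r}$ with $|\vec{r}|=1$, writing $P=\tfrac12(\mathbbm{1}+\vec{r}\cdot\vec{\sigma})$. Using the identity $\sigma_k\sigma_j\sigma_k=(2\delta_{jk}-1)\sigma_j$, a direct computation yields
\begin{equation}
J_k(P)=\tfrac12\bigl(P+\sigma_k P\sigma_k\bigr)=\tfrac12\bigl(\mathbbm{1}+r_k\sigma_k\bigr),
\end{equation}
so that
\begin{equation}
\mathbf{J}_t(P)=\sum_{k=1}^3\gamma_k(t) J_k(P)=\tfrac12\Bigl(\gamma(t)\mathbbm{1}+\sum_{k=1}^3\gamma_k(t) r_k\sigma_k\Bigr).
\end{equation}
The eigenvalues of this $2\times 2$ Hermitian matrix are $\tfrac12\bigl(\gamma(t)\pm\sqrt{\sum_k\gamma_k(t)^2 r_k^2}\bigr)$, so positivity of $\mathbf{J}_t(P)$ for every unit $\vec{r}$ reduces to the inequality
\begin{equation}
\gamma(t)^2\;\geq\;\max_{|\vec{r}|=1}\sum_{k=1}^3\gamma_k(t)^2 r_k^2 \;=\; \max_{k}\gamma_k(t)^2,
\end{equation}
together with $\gamma(t)\geq 0$.

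The last step is to extract these two bounds from the P-divisibility assumptions $\gamma_i(t)+\gamma_j(t)\geq 0$ for $i\neq j$, with $\gamma_3(t)<0$. Combining two such inequalities gives $\gamma_1(t),\gamma_2(t)\geq|\gamma_3(t)|\geq 0$, whence $\gamma(t)\geq 0$. For the spectral bound I would simply observe
\begin{equation}
\gamma(t)-\gamma_1(t)=\gamma_2(t)+\gamma_3(t)\geq 0,\qquad \gamma(t)-\gamma_2(t)=\gamma_1(t)+\gamma_3(t)\geq 0,
\end{equation}
\begin{equation}
\gamma(t)-|\gamma_3(t)|=\bigl(\gamma_1(t)+\gamma_3(t)\bigr)+\bigl(\gamma_2(t)+\gamma_3(t)\bigr)\geq 0,
\end{equation}
so that $\gamma(t)\geq\max_k|\gamma_k(t)|$, and in particular $\gamma(t)^2\geq\max_k\gamma_k(t)^2$. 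This proves $\mathbf{J}_t(P)\geq 0$ for every pure $P$, and hence positivity of $\mathbf{J}_t$.

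There is no real obstacle here; the only mildly delicate point is noticing that the function $\vec{r}\mapsto\sum_k\gamma_k^2 r_k^2$ on the unit sphere is maximized at a coordinate axis, so the worst-case bound is $\max_k\gamma_k^2$ rather than the naive $\sum_k\gamma_k^2$. Once this is recognized, positivity is a one-line consequence of the pairwise P-divisibility conditions, which is natural in view of the fact that those same conditions are precisely what guarantees P-divisibility of the dynamics generated by $\mathcal{L}_t$ in this model.
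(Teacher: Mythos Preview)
Your proof is correct and takes a genuinely different, more elementary route than the paper. The paper invokes the structural characterization of positive maps on $M_2(\mathbb{C})$---a map is positive iff its Choi matrix decomposes as $A+(\mathbbm{1}\otimes T)B$ with $A,B\geq 0$---then explicitly writes down the $4\times4$ Choi matrix of $\mathbf{J}_t$, exhibits such a decomposition, and verifies $A\geq 0$ via the condition $\gamma_1+\gamma_2+2\gamma_3\geq|\gamma_1-\gamma_2|$, which in turn reduces to the P-divisibility inequalities. Your argument bypasses the Choi matrix entirely: you exploit the Pauli conjugation identity to compute $\mathbf{J}_t(P)$ directly on the Bloch sphere, read off the eigenvalues, and reduce positivity to $\gamma(t)\geq\max_k|\gamma_k(t)|$, which follows in one line from the pairwise conditions. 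Your approach is shorter, self-contained, and does not require the St{\o}rmer--Woronowicz--type decomposition theorem; the paper's approach, on the other hand, has the structural payoff of exhibiting $\mathbf{J}_t$ explicitly as the sum of a completely positive and a completely co-positive map, which is more informative if one cares about that decomposition (and the same machinery is reused verbatim for $\mathbf{R3}$). One minor phrasing point: a general positive operator is a \emph{non-negative} (not convex) combination of rank-one projectors, but linearity of $\mathbf{J}_t$ makes this harmless.
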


 The proof is based on the following observation: for a qubit system a linear map $\Phi : M_2(\mathbb{C}) \to M_2(\mathbb{C})$ is positive if and only the corresponding Choi matrix $C_\Phi$
$$   C_\Phi := \sum_{i,j=1}^2 |i\>\<j | \otimes \Phi(|i\>\<j|) $$
satisfies
$$   C_\Phi  = A + (\mathbbm{1}\otimes T) B , $$
where $A$ and $B$ are positive $4 \times 4$ matrices, and `$\mathbbm{1} \otimes T$' denotes partial transposition. The Choi matrix for the map $\mathbf{J}_t$ reads
\begin{equation}%\label{}
C_{\mathbf{J}_t} =  \left(
\begin{array}{cc|cc}
 {\gamma_1}+{\gamma_2}+2 {\gamma_3} & 0 & 0 & {\gamma_1}+{\gamma_2} \\
 0 & {\gamma_1}+{\gamma_2} & {\gamma_1}-{\gamma_2} & 0 \\   \hline
 0 & {\gamma_1}-{\gamma_2} & {\gamma_1}+{\gamma_2} & 0 \\
 {\gamma_1}+{\gamma_2} & 0 & 0 & {\gamma_1}+{\gamma_2}+2 {\gamma_3} \\
\end{array}
\right) ,
\end{equation}
where we skipped the time dependence. Note, that $ C_{\mathbf{J}_t}  = A + (\mathbbm{1}\otimes T) B$, where

\begin{equation}%\label{}
A =  \left(
\begin{array}{cc|cc}
 {\gamma_1}+{\gamma_2}+2 {\gamma_3} & 0 & 0 & {\gamma_1}+{\gamma_2}+ 2 \gamma_3 \\
 0 & {\gamma_1}+{\gamma_2}+ 2 \gamma_3 & {\gamma_1}-{\gamma_2} & 0 \\   \hline
 0 & {\gamma_1}-{\gamma_2} & {\gamma_1}+{\gamma_2} +  2 \gamma_3 & 0 \\
 {\gamma_1}+{\gamma_2} + 2 \gamma_3 & 0 & 0 & {\gamma_1}+{\gamma_2}+2 {\gamma_3} \\
\end{array}
\right) ,
\end{equation}
and

\begin{equation}\label{Bg}
B = -2 \gamma_3 \left(
\begin{array}{cc|cc}
 0 & 0 & 0 & 0\\
 0 & 1 & 1 & 0 \\   \hline
 0 & 1 & 1 & 0 \\
 0 & 0 & 0 & 0 \\
\end{array}
\right) .
\end{equation}
Since $\gamma_3(t) < 0$ one has $B\geq 0$. Now, $A \geq 0$ if and only if

\begin{equation}\label{||}
  {\gamma_1}+{\gamma_2}+ 2 \gamma_3 \geq |{\gamma_1}-{\gamma_2}| .
\end{equation}
Let us consider two scenario: 

\begin{enumerate}
  \item if $\gamma_1 \geq \gamma_2$, then (\ref{||}) is equivalent to $\gamma_2 + \gamma_3 \geq 0$,
  \item conversely, if $\gamma_2 \geq \gamma_1$, then (\ref{||}) is equivalent to $\gamma_1 + \gamma_3 \geq 0$,
\end{enumerate}
which ends the proof of positivity of $A$. Now, since $A$ and $B$ are positive, the map $\mathbf{J}_t$ is positive and hence $\mathbf{R1}_{\psi(t)} \geq 0$.\\

Positivity of $\mathbf{R2}_{\psi(t)}$ is evident due to the relation
\begin{equation}\label{}
  \mathbf{R2}_{\psi(t)} = \mathbf{R1}_{\psi(t)} - \gamma_3(t) |\psi(t)\>\<\psi(t)| ,
\end{equation}
and $\gamma_3(t) < 0$. Finally, the operator  $\mathbf{R3}_{\psi(t)}$ corresponds to the map $\frac 12 ( \sum_k \gamma_k \sigma_k \rho \sigma_k - \gamma_3 \rho)$. Its Choi matrix reads

\begin{equation}%\label{}
C =  \left(
\begin{array}{cc|cc}
 0 & 0 & 0 & -2 {\gamma_3} \\
 0 & {\gamma_1}+{\gamma_2} & {\gamma_1}-{\gamma_2} & 0 \\   \hline
 0 & {\gamma_1}-{\gamma_2} & {\gamma_1}+{\gamma_2} & 0 \\
 -2 {\gamma_3} & 0 & 0 & 0 \\
\end{array}
\right)  = A' +  (\mathbbm{1}\otimes T) B' ,
\end{equation}
where

\begin{equation}%\label{}
A' =  \left(
\begin{array}{cc|cc}
 0 & 0 & 0 & 0 \\
 0 & {\gamma_1}+{\gamma_2}+ 2 \gamma_3 & {\gamma_1}-{\gamma_2} & 0 \\   \hline
 0 & {\gamma_1}-{\gamma_2} & {\gamma_1}+{\gamma_2} +  2 \gamma_3 & 0 \\
 0 & 0 & 0 & 0 \\
\end{array}
\right) ,
\end{equation}
and $B'=B$ defined by (\ref{Bg}). The proof of positivity of $A'$ is very similar to that of $A$.

\section{Fixed post-jump states for the eternal non-Markovian dynamics}\label{app:fnm}
Here, we first show explicitly that the dynamics defined by the master
equation~(\ref{ENM}) is actually a special case of the two-level system dynamics
identified by Prop.\ref{prop:jj}, so that we can define a $\mathbf{R}$-ROQJ unraveling
with fixed post-jump states via Eq.(\ref{eq:a}). We will then discuss the positivity of such an unraveling.

Consider the basis $\left\{\ket{1},\ket{0}\right\}$
of eigenvectors of $\sigma_z$; in such a basis the coefficients of
$\mathcal{J}_t$ according to Eq.(\ref{eq:166}) read
\begin{equation}
\left( \begin{array}{c|cc} 
J^{11}_{1 1}(t) &  \gamma_3(t)/2 \\
J^{11}_{1 0}(t) &0 \\
J^{11}_{0 1}(t) &  0 \\
J^{11}_{0 0}(t) & (\gamma_1(t)+\gamma_2(t))/2 \\
J^{10}_{1 1}(t) &  0 \\
J^{10}_{1 0}(t) & -\gamma_3(t)/2 \\
J^{10}_{0 1}(t) &  (\gamma_1(t)-\gamma_2(t))/2 \\
J^{10}_{0 0}(t) & 0 \\
J^{01}_{1 1}(t) &  0 \\
J^{01}_{1 0}(t) &  (\gamma_1(t)-\gamma_2(t))/2\\
J^{01}_{0 1}(t) & -\gamma_3(t)/2  \\
J^{01}_{0 0}(t) & 0 \\
J^{00}_{1 1}(t) &  (\gamma_1(t)+\gamma_2(t))/2 \\
J^{00}_{1 0}(t) &   0\\
J^{00}_{0 1}(t) &0  \\
J^{00}_{0 0}(t) & \gamma_3(t)/2
\end{array} \right) 
\end{equation}
and hence (setting $x=0$) the operator $\mathbf{C}(t)$ as in Eq.(\ref{eq:a}) 
is 
\begin{eqnarray}\label{aah}
\mathbf{C}(t) &=& \left( \begin{array}{cc} 
\frac{\gamma_1(t)+\gamma_2(t)-\gamma_3(t)}{2} & y(t) \\
y(t)&  \frac{\gamma_1(t)+\gamma_2(t)-  \gamma_3(t)}{2}
\end{array} \right) = \frac{\gamma_1(t)+\gamma_2(t)-  \gamma_3(t)}{2}\mathbbm{1} +y(t) \sigma_x;
\end{eqnarray}
while the operator fixing the deterministic evolution according to Eq.(\ref{eq:det}) is
\begin{equation}\label{eq:detextra}
D(t,s) = \exp\left(-\frac{1}{2}\int_s^t d \tau (\gamma_1(\tau)+\gamma_2(\tau)
)\mathbbm{1} + y(\tau)\sigma_x \right);
\end{equation}
note that for $y(t)=0$ we recover the rate operator $\mathbf{R2}_{\psi(t)}$ defined by Eq.(\ref{eq:R2}).
It is easy to see that the assumptions of Prop.\ref{prop:jj} in Eq.(\ref{eq:cond2}) and (\ref{eq:cond1})
hold, so that, starting from an initial state as in Eq.(\ref{eq:cond3}), the $\mathbf{R}$-ROQJ
fixed by $\mathbf{C}(t)$ as in Eq.(\ref{aah}) consists of jumps among 3 deterministically evolving states,
whenever the positivity of the rate operator is ensured. In particular, 
for a continuous family of $\mathbf{R}$-ROQJ this is the case whenever
the dynamics is P-divisible, as shown below.
 
First, recall that the P-divisibility
condition for the eternal non-Markovian dynamics is $\gamma_1(t),\gamma_2(t)\geq |\gamma_3(t)|$ for all $t\geq0$. Moreover, going back to the proof of Prop.~\ref{prop:jj}
in Sec.\ref{app:propks}, the positivity of the rate operator is fixed by the eigenvalues
$$
\alpha(t)\pm \beta(t)
$$
of $\mathcal{J}'_t(\ket{\psi(t)}\bra{\psi(t)})$, see the first equality in Eq.(\ref{eq:conaa}).
Actually, rather than studying the eigenvalues, it is convenient to look at the trace and determinant of the corresponding
$2\times 2$ matrix: for any pure state of the form $\ket{\psi}= c\ket{\varphi_1}\pm\sqrt{1-c^2}\ket{\varphi_2}$ (i.e., as in Eq.(\ref{eq:cond3}))
we then have the two conditions
\begin{eqnarray}
\alpha(t) &=& \frac{\gamma_1(t) + \gamma_2(t)}{2} \pm c \sqrt{1-c^2} y(t) \geq 0 \nonumber\\
\alpha^2(t)-\beta^2(t) &=& \left(\frac{\gamma_1(t) + \gamma_2(t)}{2} \pm c \sqrt{1-c^2} y(t)\right)^2-
\left(\frac{y(t)}{2} \pm c \sqrt{1-c^2}(\gamma_1(t)-\gamma_3(t))\right)^2 \geq 0 \label{eq:mort}.
\end{eqnarray}
Both inequalities hold for any $-1\leq c \leq 1$ if 
$0\leq y(t)\leq \gamma_1(t) + \frac{1}{2}\gamma_2(t)- \frac{1}{2}\gamma_3(t)$
for all $t\geq 0$. The validity of the first inequality directly follows from the fact that
$\gamma_1(t),\gamma_2(t) \geq 0$ and $\gamma_1(t) + \frac{1}{2}\gamma_2(t)- \frac{1}{2}\gamma_3(t)\leq \gamma_1(t) + \gamma_2(t)$  due to P-divisibility and that the maximum
value of $|c| \sqrt{1-c^2}$ for $-1\leq c \leq 1$ is $1/2$.
For the second inequality, let us consider first the case where the term $\pm c$ appearing in it is positive,
so that the inequality becomes
$$
\gamma_1(t)+\gamma_2(t)-y(t) \geq 2|c|\sqrt{1-c^2} \left(\gamma_1(t)-\gamma_3(t)-y(t)\right)
$$
as  $\gamma_1(t)-\gamma_3(t)\geq 0$ due to P-divisibility
and we used that $\alpha(t)\geq 0$, as well as $y(t)\geq 0$;
but the maximum value with respect to $c$ of the term at the r.h.s. is $\max\left\{0, \gamma_1(t)-\gamma_3(t)-y(t) \right\}$, so that the inequality holds for any
$y(t)\leq  \gamma_1(t) + \gamma_2(t)$ ($\gamma_2(t) \geq -\gamma_3(t)$ due to P-divisibility).
Instead, if $\pm c$ is negative the second inequality in (\ref{eq:mort}) becomes
$$
\frac{\gamma_1(t)+\gamma_2(t)}{2} - |c|\sqrt{1-c^2}y(t) \geq \left|\frac{y(t)}{2}-|c|\sqrt{1-c^2}(\gamma_1(t)-\gamma_3(t)) \right|.
$$
If $y(t)\geq2|c|\sqrt{1-c^2}(\gamma_1(t)-\gamma_3(t))$, we have
$
\gamma_1(t)+\gamma_2(t) - y(t) \geq 2 |c|\sqrt{1-c^2}(y(t) - (\gamma_1(t)-\gamma_3(t))),
$
whose r.h.s. maximum value with respect to $c$ is $\max\left\{0, y(t) - (\gamma_1(t)-\gamma_3(t)) \right\}$
so that the inequality holds for any $-1\leq c \leq 1$ if  
$y(t) \leq \gamma_1(t) + \frac{1}{2}\gamma_2(t)- \frac{1}{2}\gamma_3(t)$.
If $y(t)<2|c|\sqrt{1-c^2}(\gamma_1(t)-\gamma_3(t))$, we have
$
\gamma_1(t)+\gamma_2(t) + y(t) \geq 2 |c|\sqrt{1-c^2}(y(t) + (\gamma_1(t)-\gamma_3(t))),
$
whose r.h.s. maximum is $\max\left\{0, y(t) + (\gamma_1(t)-\gamma_3(t)) \right\}$ 
so that the inequality is ensured by $\gamma_2(t)\geq - \gamma_3(t)$, 
as well as $\gamma_1(t), \gamma_2(t), y(t) \geq 0$.

% $\gamma_1(t)-\gamma_3(t)\leq y(t)\leq \gamma_1(t) + \gamma_2(t)$ ($\gamma_1(t)-\gamma_3(t)\geq 0$
%and $\gamma_1(t)-\gamma_3(t)\leq \gamma_1(t) + \gamma_2(t)$ due to P-divisibility), so that
%$y(t) \pm 2 c \sqrt{1-c^2}(\gamma_1(t)-\gamma_3(t)) \geq 0$
%for any $-1\leq c \leq 1$ and the inequality
%reduces to
%$$
%\gamma_1(t) + \gamma_2(t) - y(t) \geq \pm 2 c \sqrt{1-c^2}(\gamma_1(t)-\gamma_3(t)-y(t)).
%$$
%But since we are considering in particular $y(t) \geq \gamma_1(t)-\gamma_3(t)$, the term at the r.h.s.
%has maximum value equal to $y(t) +\gamma_3(t)-\gamma_1(t)$
%0 (for $c=0$ or $c=1$) and then the inequality is always satisfied
%for $\gamma_1(t)-\gamma_3(t)\leq y(t) \leq \gamma_1(t)+\gamma_2(t)$. 
%On the other hand, for $0 \leq y(t) \leq \gamma_1(t)-\gamma_3(t)$,

In total, for any $y(t)$ such that 
$0\leq y(t)\leq \gamma_1(t) + \frac{1}{2}\gamma_2(t)- \frac{1}{2}\gamma_3(t)$ for all $t\geq 0$
the rate operator $\mathbf{R}_{\psi} = \mathcal{J}'_t(\ket{\psi}\bra{\psi})$ is positive
for any $\ket{\psi}=c \ket{\varphi_1}\pm\sqrt{1-c^2}\ket{\varphi_2}$
with $-1 \leq c \leq 1$
and then if the initial state satisfies the latter condition (which is preserved by the deterministic evolution
in Eq.(\ref{eq:detextra})), we will have a \emph{positive} rate-operator
unravelling with the states $\left\{\ket{\psi_0(t)}, \ket{\varphi_{+}(t,s)},\ket{\varphi_-(t,s)}\right\}$.

\end{document}